\newif\ifFull
\renewcommand{\subsection}[1]{\paragraph{\textbf{#1}.}}
\title{Capturing Lombardi Flow in\\
       Orthogonal Drawings by Minimizing \\
       the Number of Segments}
\author{
Md.~Jawaherul~Alam
\and
Michael~Dillencourt
\and
Michael~T.~Goodrich
}
\institute{
Department of Computer Science, University of California,
Irvine, CA, USA
}
\newcommand{\NP}{NP}
\newcommand{\Deg}{{\rm deg}}
\newcommand{\emb}{\mathcal{E}}
\newcommand{\closure}{closure}
\begin{document}

\maketitle

\begin{abstract}
Inspired by the artwork of Mark Lombardi, we study the problem of
constructing orthogonal drawings where a small number of horizontal
and vertical line segments covers all vertices. 
\ifFull
This gives the impression
of a visual flow across the drawing for a viewer and the aesthetic
quality of such visual flow is evidenced by recent user studies on
orthogonal drawings. 
\fi
We study two problems on
orthogonal drawings of planar graphs, one that minimizes the total number 
of line segments and another that minimizes the number of line
segments that cover all the vertices.
We show that the first problem can be solved by a non-trivial
modification of the
flow-network orthogonal bend-minimization algorithm of Tamassia,
resulting in a polynomial-time algorithm.
We show that the second problem is 
\NP-hard even for planar graphs with maximum degree 3.
Given this result,
we then address this second optimization
problem for trees and series-parallel graphs with maximum
degree 3. For both graph
classes, we give polynomial-time 
algorithms for upward orthogonal drawings with the
minimum number of segments covering the vertices.
\end{abstract}

\section{Introduction}

The American artist Mark Lombardi~\cite{lombardi} 
produced many hand-drawn visualizations 
of social networks to represent conspiracy theories, 
with the image shown in Fig.~\ref{fig:lombardi} being one of his most
famous works. 
Many people have observed that his drawings have
strong aesthetic qualities and high readability.
As a result,
there has been considerable work in the graph drawing literature 
inspired by his work
(e.g., see~\cite{DEGKN12,ABKKKW14,subcubic,DEGKL11,PHNK12,LN12,CCGKT11}).
For example, Duncan \textit{et al.}~\cite{DEGKN12} 
observed that in the art work by Lombardi
edges tend to be nearly evenly spaced around each vertex, and they
introduced the notion of a \emph{Lombardi drawing}, where the edges
incident on each vertex are drawn to have perfect angular resolution. 
Our approach in this paper, however, is to focus instead on a different
aesthetic quality evident in some of the works of Lombardi.

\begin{figure}[hbt!]
\centering
\includegraphics[width=\textwidth]{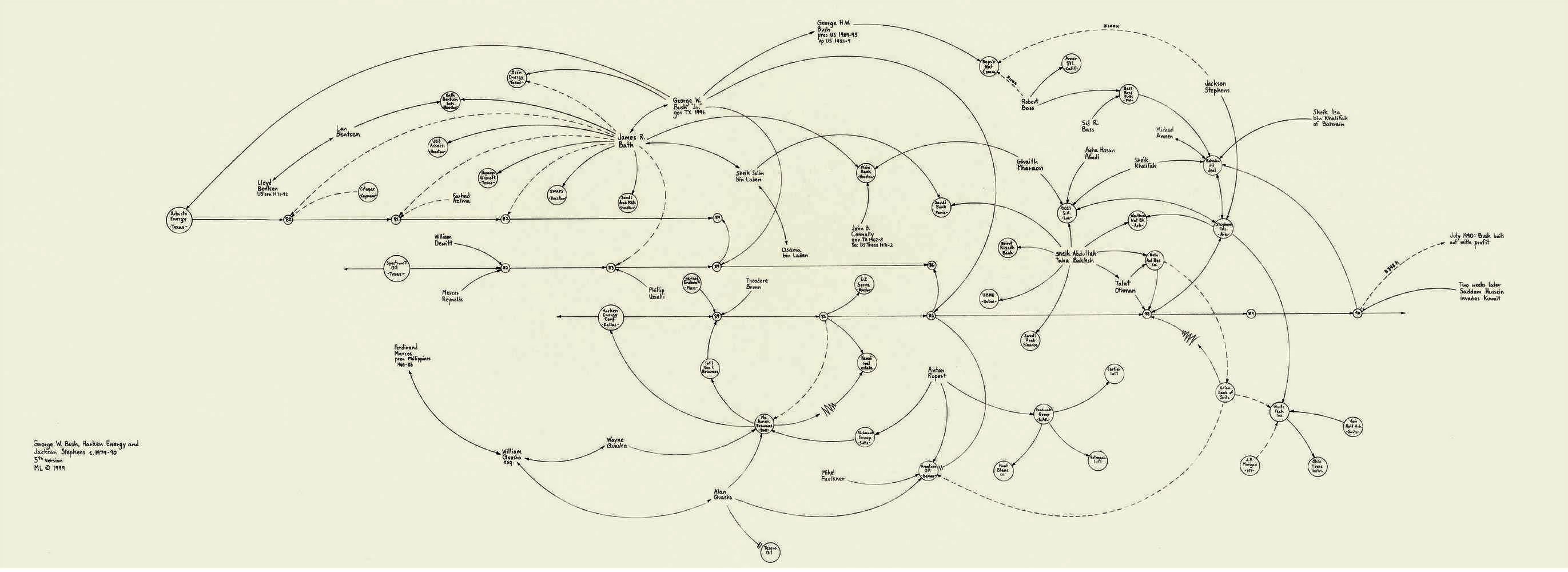}
\caption{Mark Lombardi: George W. Bush, Harken Energy, and Jackson 
Stevens c.~1979--90, 1999. Graphite on paper, 
$20 \times 44$ inches~\cite{lombardi}.}
\label{fig:lombardi}
\end{figure}

Another aesthetic property
prevalent in Lombardi's art work is that he 
tends to place many vertices on consecutive
stretches of linear or circular segments that go across the whole
drawings; see Fig.~\ref{fig:lombardi}. This creates a metaphor of
a ``visual flow'' across a drawing. 
Inspired by this property, we
are interested in this paper in
a new aesthetic criteria for orthogonal drawings of graphs.
Namely,
we would like to place the vertices of a graph on a small number
of contiguous line segments in the drawing.


As is well-known in the graph drawing literature,
\textit{orthogonal drawing} is a common drawing paradigm
for visualizing planar graphs in numerous applications, ranging from
VLSI circuit design~\cite{Va81,Le80}, architectural
floor-planning~\cite{LM81}, UML diagram design~\cite{See97}, and a
popular visualization technique used in network layout systems,
e.g. see~\cite{yed,graphviz,OGDF}. 
\ifFull
In such a drawing of a planar graph,
each vertex is drawn as a point and each edge is drawn (without
crossing) as an orthogonal (axis-aligned) polyline segment between
the corresponding points. 
\fi
For readability and aesthetics of such a
drawing, traditional orthogonal layout algorithms~\cite{BETT09,KW01,NS}
try to minimize the number of bends on the edges in the drawing.
However, a recent empirical study by Kieffer {\it et al.}~\cite{hola} 
concluded that orthogonal
layouts generated by traditional automatic algorithms that focus
primarily on bend and area minimization lack
some aesthetic qualities compared to manual drawings produced by humans. 
Their study instead suggests that, like in several of the works
by Lombardi, humans prefer drawings
with linear ``flow'' that connect chains of adjacent vertices.
Our specific interest in this paper is to study such 
``Lombardi flow'' for orthogonal graph drawings.

\subsection{Related Work}

In a seminal work, Tamassia~\cite{Tam87} showed that for 
a \textit{plane graph} 
(i.e., a planar graph with a fixed planar embedding) with
the maximum degree $4$, an orthogonal drawing with the minimal
number of edge bends can be computed in $O(n^2 \log n)$ time, by
solving a minimum-cost maximum-flow problem. This time complexity
was later reduced to $O(n^{7/4}\sqrt{\log n})$ by Garg and
Tamassia~\cite{GT96}, to $O(n^{1.5})$ by Cornelsen and
Karrenbauer~\cite{cornel}, and to linear time for plane graphs
with maximum degree 3 by Rahman and Nishizeki~\cite{RN02}. In
contrast, if the embedding is not given, the problem becomes
NP-hard~\cite{GT01}. Bertolazzi \textit{et al.}~\cite{BBD00} proposed
a branch-and-bound approach for minimizing the number of bends over
all possible embeddings. There have also been some work on restricting
the number of bends locally, e.g., on minimizing the maximum number of
bends per edge~\cite{BK98, BKRW14} in an orthogonal drawing instead
of the total number of bends.
Biedl and Kant~\cite{BK98} gave a linear-time
algorithm for constructing an orthogonal drawing with at most two
bends per edge for any graph with degree at most four except for
the octahedron. 
Orthogonal drawings of graphs with vertices of degree higher
than four have also been studied~\cite{BBD00,foessmei,TBB88}.
Bl{\"a}sius \textit{et al.}~\cite{BKRW14} 
give a polynomial-time algorithm to decide whether
a plane graph has an orthogonal drawing with at most one bend per
edge. 
There has also been some research on 
\textit{rectangular drawings}, where each face is 
rectangle~\cite{AKM-ortho,RNG04}. 
This is not the same as a minimum-segment drawing, 
though, since
not every planar graph with maximum degree 4 admits a rectangular drawing,
and even when such a drawing exists,
it does not necessarily give a drawing
with the minimum number of segments containing 
all vertices, e.g., see~\ref{fig:example01} in the appendix.
In contrast, minimizing the number of non-orthogonal 
segments in a drawing of a graph has been
previously addressed, 
\ifFull
in the context of straight-line drawings, for outer-planar
graphs, 2-trees, planar 3-trees, and 3-connected plane
graphs~\cite{DESW07}, and for series-parallel graphs with the maximum
degree 3~\cite{SAAR08}. 
\else
in the context of straight-line drawings~\cite{DESW07,SAAR08,DMNW13,MNBR13,DM14,IMS15}. Minimizing the number of circular arcs to draw planar graphs have also been studied~\cite{BR15,Schulz15,HKMS16}.
\fi
However to the best of our knowledge, this
problem has never been studied before in the context of orthogonal
drawings.

\subsection{Our Results}
We study two problems for orthogonal drawings of planar graphs. 
These problems
attempt to capture and formalize the concept of ``Lombardi flow:''
\begin{enumerate}
\item
A \textit{minimum segment orthogonal drawing}, or \textit{MSO-drawing},
of a planar graph $G$ is an orthogonal drawing of $G$ with the
minimum number of segments. 
Given a combinatorial embedding of a degree-4
planar graph, $G$, the MSO-drawing problem is to produce such a drawing of $G$.
\item
A \textit{minimum segment cover orthogonal drawing}, or 
\textit{MSCO-drawing}, of $G$
is one with the smallest set of segments covering all vertices of
$G$. 
Given a combinatorial embedding of a degree-4
planar graph, $G$, the MSCO-drawing problem is to produce such a drawing of $G$.
\end{enumerate}

In this paper, we present the following results.
\begin{itemize}
\item 
We provide an interesting modification to the network-flow algorithm
by Tamassia~\cite{Tam87}
to give a polynomial-time algorithm 
for the MSO-drawing problem.
\item 
We show that the MSCO-drawing problem is \NP-hard 
even for plane graphs with maximum degree 3.
\item 
For trees and series-parallel graphs with maximum degree 3,
 we give polynomial-time algorithms for upward 
orthogonal drawings with the minimum number of segments covering the
vertices. This solves the upward versions of the MSCO-drawing problem
in polynomial time for these types of graphs.
\end{itemize}

\section{Preliminaries}

An \textit{orthogonal drawing} of a planar graph, $G$, is a planar
drawing of $G$ such that each vertex is drawn as a point and each
edge is drawn as a orthogonal (axis-aligned) path between the
corresponding points. For an orthogonal drawing, it is generally
assumed that the input graph is planar with maximum degree 4, and
that for each vertex, at most one edge is incident in its four
orthogonal \textit{ports}: left, right, top, bottom. 
Although models incorporating vertices with degree more than 4 were
introduced later (e.g., by F{\"o}{\ss}meier and Kaufmann~\cite{foessmei}),
in this article we assume that the input graph is always a planar
graph with maximum degree 4.
A \emph{maximal line-segment} of an orthogonal drawing, $\Gamma$, is a 
set of line-segments (all horizontal or all vertical) $l_1, l_2,
\ldots, l_k$, such that $k$ is as large as possible 
so that, for $i\in 1, 2, \ldots, k-1$, $l_i$ and
$l_{i+1}$ have a common end point. In this paper we simply use the
term \textit{segment} to refer to a maximal line-segment.

Tamassia~\cite{Tam87} showed that an orthogonal drawing, $\Gamma$,
of a biconnected plane graph\ifFull\footnote{A planar graph with a fixed
     combinatorial planar embedding is known as a \emph{plane} graph.}
     \fi, $G$, can be described by augmenting the embedding
of $G$ with the angles at the bends (\emph{bend angles}) and the
angles between pairs of consecutive edges around the vertices of
$G$ (\emph{vertex angles}). Specifically, an \emph{angle assignment}
is a mapping from the set $\{\pi/2,\pi,3\pi/2\}$ to the angles of
$G$, where each angle is assigned exactly one value. An angle
assignment of $G$ can precisely describe the shape of $\Gamma$,
although it does not specify edge lengths. Given an angle assignment
$\Phi$, one can test if $\Phi$ corresponds to an orthogonal drawing
of $G$ by Lemma~\ref{lem:tamassia}, which is implied from~\cite{Tam87};
also see~\cite{AKM-ortho}.

\begin{lemma}[Tamassia~\cite{Tam87}]
\label{lem:tamassia}
An angle assignment $\Phi$ for a plane graph $G$ 
 corresponds to an orthogonal drawing of $G$ if and only if $\Phi$ satisfies the following
 conditions $(P_1$--$P_2)$.
\begin{enumerate}[$($P$_1)$]
	\item The sum of the assigned angles around each vertex $v$ in $G$ is $2\pi$.
	\item The number of $\pi/2$ angle around each face $f$ is exactly $4$ more (resp. $4$ less) than
		the number of $3\pi/2$ angle around $f$, if $f$ is an inner face (resp. the outer face).
\end{enumerate}
Given an angle assignment $\Phi$ satisfying $(P_1$--$P_2)$, one can obtain an orthogonal drawing of $G$ (i.e., the exact coordinates for the vertices) in linear time.
\end{lemma}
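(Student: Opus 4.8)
The plan is to establish both directions of the equivalence, treating the ``only if'' direction as a turning-number computation and the ``if'' direction as a constructive argument.

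For necessity (a drawing forces $(P_1)$ and $(P_2)$): in any planar drawing the angles around a vertex tile the full turn, giving $(P_1)$ at once, and since the drawing is orthogonal every such angle is a positive multiple of $\pi/2$ that is at most $3\pi/2$ (each vertex has degree at least $2$ in a biconnected graph), so each value lies in $\{\pi/2,\pi,3\pi/2\}$. For $(P_2)$ I would walk along the boundary of a face and sum the exterior turns: at a corner of interior angle $\pi/2$, $\pi$, or $3\pi/2$ the boundary turns by $+\pi/2$, $0$, or $-\pi/2$ respectively, and this holds at vertex corners and at bends alike. The discrete Gauss--Bonnet (turning-number) theorem says these turns sum to $+2\pi$ for an inner face and $-2\pi$ for the outer face; writing $a$ and $c$ for the numbers of $\pi/2$ and $3\pi/2$ angles around the face, this reads $(a-c)\frac{\pi}{2}=\pm 2\pi$, i.e.\ $a-c=\pm 4$, which is exactly $(P_2)$.

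For sufficiency I would reconstruct the drawing from $\Phi$ in two stages. First, I assign to every dart (an edge traversed in a fixed direction, subdivided at its bends) an absolute compass direction, one of four directions related by quarter-turn rotation; concretely a representative in $\{0,1,2,3\}$ under addition modulo $4$. Fix one dart arbitrarily and propagate: at a vertex the direction rotates by the vertex angle, and at a bend it rotates by the bend angle. The key point is that this assignment is well-defined, i.e.\ independent of the propagation path. Consistency around a single vertex is exactly $(P_1)$, since a full loop of total turning $2\pi$ returns the original direction, and consistency around a single face boundary is exactly $(P_2)$, since the net rotation there is $\pm 2\pi$. Because $G$ is connected and the face boundaries generate the cycle space of the embedded graph, zero holonomy around these generators forces zero holonomy around every closed walk, so the direction function is globally consistent.

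Second, once each segment is labelled horizontal or vertical with a fixed orientation, I would produce coordinates. Every maximal horizontal segment must receive a single $y$-coordinate and every maximal vertical segment a single $x$-coordinate; the orientations induce a strict order on these coordinates. I would build two constraint DAGs, one for the $x$-coordinates and one for the $y$-coordinates, assign values by longest-path layering, and argue that the embedding together with the turning conditions makes these orders acyclic, so integer coordinates exist and are computable in linear time; a final check confirms the result respects the given rotation system and is non-crossing. The main obstacle is the first stage of sufficiency: proving global consistency of the direction assignment. Conditions $(P_1)$ and $(P_2)$ only control the holonomy around the local generators (stars of vertices and face boundaries), so the crux is the homological step that upgrades local consistency to consistency over all cycles. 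After that the coordinate assignment is essentially routine, though one still must verify that the induced orders are acyclic so that a genuinely planar, non-self-intersecting realization results.
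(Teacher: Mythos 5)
Your necessity argument and the first stage of your sufficiency argument are sound, and they coincide with the classical proof of this lemma (which the paper does not prove itself; it cites Tamassia and only sketches the relevant construction later, in its section on computing exact drawings). The turning-number computation giving $a-c=\pm 4$ is exactly right, and the holonomy argument is also correct: the boundaries of the inner faces generate the cycle space of a connected plane graph, turns compose additively, so $(P_1)$ and $(P_2)$ do upgrade local consistency of the direction labelling to global consistency. That step, which you single out as ``the main obstacle,'' is in fact the standard, unproblematic part.

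The genuine gap is in your second stage, which you dismiss as ``essentially routine.'' The constraint DAGs you describe encode only the strict inequalities induced by individual darts (an east-pointing dart forces its left vertical segment before its right one). Two problems follow. First, acyclicity of these digraphs is not obvious from $(P_1)$--$(P_2)$ alone; a directed cycle corresponds to a closed walk using only east and vertical darts, and ruling this out is part of what realizability must establish, not something you may assume. Second, and more fundamentally, even granting acyclicity, longest-path layering gives no protection against two DAG-unrelated vertical segments receiving the same $x$-coordinate with overlapping $y$-extents, nor against a horizontal segment crossing a vertical one: nothing in the dart constraints separates non-adjacent pieces of the graph. So ``a final check confirms the result is non-crossing'' is precisely the claim that needs proof, and it is the crux of the sufficiency direction. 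Tamassia's proof resolves it by a device your construction omits: before compaction, the orthogonal representation is refined so that every face becomes a rectangle, by inserting dummy vertices and edges while preserving $(P_1)$--$(P_2)$; for a rectangular subdivision, consistent integer side lengths exist (opposite sides of each rectangle equal), planarity follows by assembling rectangles face by face, and removing the dummy objects yields the drawing in linear time. Without this refinement, or an equivalent mechanism (e.g., visibility-based separation constraints whose sufficiency you would then have to prove), your compaction step does not go through as written.
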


\section{Minimum Segment Orthogonal Drawing}

In this section, we address the problem of finding an orthogonal
drawing of an embedded planar graph, $G$, with maximum degree 4 using
a minimum number of
segments. 
We call such a drawing a \textit{minimum-segment orthogonal drawing}, 
or an \textit{MSO}-drawing. We show that by a modification
and generalization of a the network-flow algorithm by Tamassia~\cite{Tam87},
we can give a polynomial-time algorithm to find an MSO-drawing of
$G$.
In addition, this generalization also enables an interactive
trade-off adjustment in the orthogonal drawing between bends at a degree-2
vertex and edge bends, if that is desired.

We first briefly review the minimum-cost flow-network formulation
of Tamassia~\cite{Tam87} for computing minimum-bend orthogonal
drawings of plane graphs. We then describe how this network can be
modified and generalized to compute drawings with the minimum number
of segments. We first assume that the input graph is connected and
the degree of each vertex in the input graph is at least~$2$. In
Section~\ref{sec:general} we consider a general planar graph with
maximum degree~4.

\subsection{Network Flow-Formulation by Tamassia}

Given a plane graph $G$ with maximum degree 4, the algorithm
by Tamassia~\cite{Tam87}
constructs a flow network $H$ in order to find an orthogonal layout
with the minimum total number of bends on the edges.  In Tamassia's
network, $H$, there are \textit{boundary-vertices}, representing the
vertices of $G$, and \textit{face-vertices}, representing the faces
of $G$; see Fig.~\ref{fig:tamassia}(a). The edges of $H$ are the
bidirectional edges of the dual graph of $G$ (dashed grey edges in
Fig.~\ref{fig:tamassia}(a), called \textit{dual edges}) and the
edges from each boundary-vertex to its incident face-vertices (solid
grey edges, called \textit{angle edges}). Each boundary-vertex, $v$,
is a source that produces $4$ units. Each face-vertex
for a face $f$ is a sink that consumes either 
$2\, \Deg(f) - 4$ units (for inner faces) 
or $2\, \Deg(f) + 4$ units (for the outer
face), where $\Deg(f)$ is the number of vertices on the face $f$. The cost
of an edge is $1$ unit if it connects two face-vertices (i.e., a
dual edge), and $0$ otherwise. A min-cost max-flow in this network
corresponds to an orthogonal drawing of $G$, as follows. A flow of
$t\in\{1, 2, 3\}$ units from a boundary-vertex to a face-vertex
determines a $t\pi/2$ assignment to the corresponding angle in $G$.
A flow of $t$ units through some dual edge corresponds to $t$ bends
in the corresponding edge of $G$; see Fig.~\ref{fig:tamassia}(b).
Using this minimum-cost flow-network, Tamassia~\cite{Tam87} gave
an $O(n^2 \log n)$-time algorithm for orthogonal drawing with minimum
number of bends. Cornelsen and Karrenbauer~\cite{cornel} used the
same network but improved the running time to $O(n^{1.5})$ with a
faster min-cost max-flow algorithm for this planar network.

\subsection{Flow Formulation for Minimum-Segment Orthogonal Drawings}
\label{sec:modification}

We modify and generalize the flow network by Tamassia to
obtain an orthogonal drawing with a minimum number of segments.
In our modification, we change the consumption and production of
the flows in the boundary and face vertices, modify some edge
directions and costs, and interpret the flows on the edges slightly
differently. 

In our modified network, we again have the
\textit{boundary-vertices}, \textit{face-vertices} and the
bidirectional \textit{dual edges}. However instead of the
unidirectional angle edges from each boundary-vertices to its
incident face-vertices, we have bidirectional \textit{angle edges}
between each boundary-vertices and its incident face-vertices; see
Fig.~\ref{fig:modification}(a). Each face-vertex for an inner face
$f$ is a source with a production of $4$ units, while the face-vertex
for the outer face is a sink with a consumption of $4$ units. Each
boundary-vertex for a vertex $v$ with degree 3 or 4 in $G$ is a sink
with a consumption of $2$, or $4$ units, respectively, while
boundary-vertices corresponding to vertices of degree 2 is neither a
source nor a sink. 
Thus the consumption of a boundary-vertex for a vertex $v$ 
is $2\, \Deg(v) - 4$, where $\Deg(v)$ denotes the degree of $v$ in $G$. 
We set the capacity of flow on each direction of an angle edge to be
$1$ unit, while the dual edges are uncapacitated in both directions.
Finally, the cost of an edge is $c_1$ if it is a dual edge, and $c_2$
if it is an angle edge, for some constant $c_1$, $c_2$. We 
discuss below how to make suitable assignments to $c_1$ and $c_2$ to obtain a
minimum-segment orthogonal drawing.

We interpret the flow on different edges of $H$ to obtain an angle
assignment for $G$ as follows. For each (bidirectional) angle edge
$e$, a flow of $0$ unit gives a straight ($\pi$) angle assignment
to the corresponding angle, while a flow of $1$ unit from the
face-vertex to the boundary vertex (resp. from the boundary vertex
to the face-vertex) in $e$ gives a $\pi/2$ (resp. $3\pi/2$) assignment
to the angle. A flow of $t$ units through each dual edge again
corresponds to $t$ bends in the corresponding edge of $G$. We first
show that a valid flow in the modified flow network $H$ corresponds
to a valid angle assignment for an orthogonal layout of $G$.

\begin{lemma}
\label{lem:flow} 
Each valid assignment of flows on the edges of $H$ corresponds to an angle assignment for $G$, satisfying Conditions ($P_1$--$P_2$) of Lemma~\ref{lem:tamassia}, and vice versa.
\end{lemma}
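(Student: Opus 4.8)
The plan is to exhibit the stated correspondence as a pair of mutually inverse translations, given precisely by the flow interpretation described above, and to verify that the two local constraints of a valid flow---capacities and conservation at each node---match exactly Conditions $(P_1)$ and $(P_2)$. I would first record that the capacity bound of $1$ in each direction on an angle edge, together with integrality, forces the net flow on each angle edge to lie in $\{-1,0,+1\}$; under the interpretation these three values are in bijection with $\{\pi/2,\pi,3\pi/2\}$, so a valid flow assigns to every angle of $G$ exactly one admissible value, i.e.\ it is a genuine angle assignment. The dual edges, being uncapacitated, place no restriction on the bend counts. It then remains to show that conservation at the boundary-vertices is equivalent to $(P_1)$ and conservation at the face-vertices is equivalent to $(P_2)$.

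For a boundary-vertex $v$, let $a$, $b$, $c$ denote the numbers of angles around $v$ assigned $\pi/2$, $\pi$, $3\pi/2$ respectively, so that $a+b+c=\Deg(v)$. By the interpretation, a $\pi/2$ angle sends one unit into $v$ and a $3\pi/2$ angle draws one unit out of $v$, so the net flow into $v$ equals $a-c$. Since $v$ is a sink of demand $2\,\Deg(v)-4$, conservation reads $a-c=2\,\Deg(v)-4$; eliminating $\Deg(v)$ with $a+b+c=\Deg(v)$ turns this into $a+2b+3c=4$, which is precisely the statement that the angles around $v$ sum to $2\pi$, i.e.\ $(P_1)$. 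The steps are reversible, so $(P_1)$ at $v$ holds if and only if conservation holds at the corresponding boundary-vertex.

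The face-vertices are the delicate case, because the angles incident to a face $f$ come in two flavours: vertex angles at the corners of $f$ and bend angles along its edges, and $(P_2)$ counts both. The angle edges around $f$ contribute to the net flow out of $f$ exactly (number of $\pi/2$ vertex angles) $-$ (number of $3\pi/2$ vertex angles). The hard part is the bookkeeping on the dual edges: I would fix the convention that a unit of flow leaving $f$ along a dual edge is a bend that is convex ($\pi/2$) as seen from $f$ and reflex ($3\pi/2$) as seen from the neighbouring face, so that the dual edges contribute (number of $\pi/2$ bend angles) $-$ (number of $3\pi/2$ bend angles) to the net flow out of $f$. Summing the two contributions, the total net flow out of $f$ equals the total $\pi/2$-count minus the total $3\pi/2$-count around $f$. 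Since an inner face is a source of supply $4$ and the outer face a sink of demand $4$, conservation says this difference is $+4$ for inner faces and $-4$ for the outer face, which is exactly $(P_2)$.

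Having matched each flow constraint with the corresponding angle condition, both directions of the ``if and only if'' follow: a valid flow yields, through the interpretation, an angle assignment meeting $(P_1)$ and $(P_2)$, and conversely any such assignment defines a flow that respects all capacities (the angle values are admissible) and all conservation constraints (by the two equivalences just established). As a consistency check I would also note that global balance is automatic: summing supplies and demands, $4(|F|-1)$ on the source side equals $\sum_v(2\,\Deg(v)-4)+4 = 4|E|-4|V|+4$ on the sink side precisely by Euler's formula, so no parity obstruction can arise. The main obstacle, as flagged, is the sign-consistent accounting of bend angles against dual-edge flow directions; once that convention is pinned down, the remainder is linear bookkeeping.
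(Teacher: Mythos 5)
Your proposal is correct and follows essentially the same route as the paper's proof: both identify flow conservation at the boundary-vertices with Condition $(P_1)$ and conservation at the face-vertices (vertex angles accounted through angle edges, bend angles through dual edges) with Condition $(P_2)$, using exactly the stated flow interpretation in both directions. The only differences are cosmetic: where the paper verifies $(P_1)$ by a case analysis on $\Deg(v)\in\{2,3,4\}$, you eliminate the degree algebraically (combining $a-c=2\,\Deg(v)-4$ with $a+b+c=\Deg(v)$ to get $a+2b+3c=4$), and your Euler-formula balance check is a pleasant extra consistency observation that the paper omits.
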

\begin{proof} 
Consider first a valid flow $\mathcal{F}$ in $H$. We show that this
gives a desired angle assignment to $G$. For each vertex $v$ with
degree $4$ in $G$, its corresponding boundary-vertex has four
incident angle edges and a consumption of $4$ unit. Thus the
boundary-vertex has to receive $1$ unit of flow in each angle,
giving a $\pi/2$ assignment to each angle around $v$. Similarly, a
boundary-vertex corresponding to a vertex of degree $3$ has
to receive $1$ unit of flow from exactly two of its three incident
angle edges. Thus, each degree-$3$ vertex has two $\pi/2$ and one
$\pi$ assignment around itself. Finally consider the boundary-vertex
$v$ for each degree-$2$ vertex. Either there is no flow in either of the
angle edges incident to $v$ (which gives two $\pi$ angles around
the degree-$2$ vertex), or $v$ receives $1$ unit of flow in one
incident edge and gives back the flow in the other (which gives a
$\pi/2$ and a $3\pi/2$ assignments to the two angles). Thus in all
cases, for each vertex of $G$, the summation of the assigned angles
around it is $2\pi$, satisfying Conditions $P_1$.

For each face-vertex, each outgoing flow (of $1$
unit) corresponds to a $\pi/2$ angle (on a vertex or an edge bend),
and each incoming flow (of $1$ unit) corresponds to a $3\pi/2$
angle. Since each face-vertex corresponding to an inner face is a
source with a production of $4$ units, and the face-vertex for the
outer face is a sink with a consumption of $4$ units, Condition $P_2$
follows.

Conversely for each valid angle assignment $\Phi$ on $G$, we can
construct a valid flow assignment on the edges of $H$ as follows.
By Condition $P_1$, for each vertex of degree $4$, all four of its
incident angles are $\pi/2$, thus the corresponding boundary-vertex
receives a total of $4$ units of flow as necessary. Similarly, for
each degree-$3$ vertex, exactly two of its incident angles is
$\pi/2$, providing the sources of the necessary $2$ units of incoming
flow for the boundary-vertex. A degree-$2$ vertex can either have
two incident $\pi$ angles, or a $\pi/2$ and a $3/\pi/2$ angles. In
the first case, the boundary-vertex has no incoming or outgoing
flow, while in the second case, it has equal ($1$ unit) incoming
and outgoing flow. Again By Condition $P_2$, the face-vertex for
each inner face sends $4$ more unit of flow than what it receives,
while the face-vertex for the outer face receives $4$ more unit of
flows than what it sends.
\qed
\end{proof}

We now discuss suitable values of the edge costs $c_1$, $c_2$. Let
$\mathcal{F}$ be a valid flow in $H$, and let $\Gamma$ be the
corresponding orthogonal layout of $G$. Note that each edge bend
incurs a cost of $c_1$, while each $\pi/2$ and $3\pi/2$ angle
assignment around a vertex in $\Gamma$ incurs a cost of $c_2$ in
$\mathcal{F}$. In any angle assignment for $G$, there are exactly
four $\pi/2$ angles around each degree-$4$ vertex, and there are
exactly two $\pi/2$ angles around each degree-$3$ vertex in $G$.
On the other hand, for each degree-$2$ vertex, there are either two
$\pi$ angles around it, or there is a $\pi/2$ and a $3\pi/2$ angle
around it. Call a degree-$2$ vertex a \textit{vertex-bend} in $\Phi$
if it has a $\pi/2$ and a $3\pi/2$ angle around it. Thus the total
cost in $\mathcal{F}$ is $c_1(4\#(V_4)+3\#(V_3)+2\#(B_v))+c_2\#(B_e)$,
where $\#(V_4)$, $\#(V_3)$, $\#(B_v)$, $\#(B_e)$ denotes the number
of degree-$4$ vertices in $G$, the number of degree-$3$ vertices,
the number of vertex bends in $\Gamma$, and the number of edge-bends
in $\Gamma$, respectively.  Minimizing the cost in this network for
different values of $c_1$ and $c_2$ optimizes different properties
of the resulting orthogonal layout. (One might verify that $c_1=1$,
$c_2=0$ gives an orthogonal layout with minimum number of total
edge bends, similar to Tamassia~\cite{Tam87}). Lemma~\ref{lem:MSO}
gives a suitable value for $c_1$, $c_2$ to minimize the number of
segments in the corresponding orthogonal layout.

\begin{lemma}
\label{lem:MSO} A minimum-cost flow in $H$ for the values of $c_1=1/2$, $c_2=1$ gives an angle assignment of $G$ corresponding to a minimum segment drawing of $G$.
\end{lemma}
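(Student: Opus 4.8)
The plan is to reduce the lemma to a single combinatorial identity relating the cost of a flow to the number of segments of the associated drawing. By Lemma~\ref{lem:flow}, valid flows $\mathcal{F}$ in $H$ are in bijection with angle assignments $\Phi$ satisfying Conditions $P_1$ and $P_2$, and by Lemma~\ref{lem:tamassia} these in turn correspond to orthogonal drawings $\Gamma$ of $G$. So it suffices to show that, for $c_1 = 1/2$ and $c_2 = 1$, the cost of $\mathcal{F}$ equals the number of segments of $\Gamma$ plus a quantity that depends only on $G$ and not on the choice of $\Gamma$. Granting this, a minimum-cost flow corresponds exactly to a drawing with the fewest segments. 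I would also invoke the integrality of min-cost flows on a network with integer capacities and supplies, so that the optimal $\mathcal{F}$ is integral and hence does correspond to an angle assignment under Lemma~\ref{lem:flow}.

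The heart of the argument is a counting formula for the number of segments of $\Gamma$. First I would observe that each maximal segment is a straight horizontal or vertical line and therefore has exactly two endpoints, so the number of segments equals half the total number of segment endpoints. I would then charge endpoints to the nodes of the drawing (original vertices and edge bends) by a local case analysis on the set of axis directions (left, right, up, down) in which the drawing continues from that node: a horizontal segment ends at the node exactly when exactly one of the two horizontal directions is used, and similarly for the vertical directions. This yields: a degree-$4$ vertex contributes $0$ endpoints (both opposite pairs of directions are used, so a horizontal and a vertical segment pass straight through), a degree-$3$ vertex contributes $1$ (one opposite pair is used and the remaining single direction ends a segment), a degree-$2$ vertex with two straight ($\pi$) angles contributes $0$, a vertex-bend contributes $2$, and each edge bend contributes $2$. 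Summing over all nodes and halving gives, in the paper's notation, a number of segments equal to $\frac{1}{2}\#(V_3) + \#(B_v) + \#(B_e)$.

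With this identity I would finish by substituting $c_1 = 1/2$ and $c_2 = 1$ into the cost expression derived immediately above the lemma. Since the numbers of degree-$3$ and degree-$4$ vertices, $\#(V_3)$ and $\#(V_4)$, are fixed by $G$, the only terms that vary with $\Gamma$ are $\#(B_v)$ and $\#(B_e)$; for these costs the variable part of the total cost is exactly $\#(B_v)+\#(B_e)$, which matches the variable part of the segment count. Hence the cost of $\mathcal{F}$ equals the number of segments of $\Gamma$ plus a constant determined by $\#(V_3)$ and $\#(V_4)$ alone, so minimizing the cost minimizes the number of segments, as claimed.

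The step I expect to be the main obstacle is the segment-endpoint count, and in particular making the direction-based case analysis airtight: one must verify that the local configurations above are exhaustive, that the contribution of each is exactly as stated (the degree-$4$ ``pass-through'' and the degree-$3$ ``one pair through, one direction ending'' cases are the delicate ones), and that no maximal segment is degenerate or closed, so that the ``two endpoints per segment'' accounting is valid. This last point uses that a straight segment cannot close up on itself and that every vertex has degree at least $2$. Once the count $\frac{1}{2}\#(V_3)+\#(B_v)+\#(B_e)$ is established, matching it to the cost is a one-line substitution.
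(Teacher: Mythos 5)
Your proposal is correct and takes essentially the same route as the paper's own proof: both arguments establish the endpoint-counting identity $\#(s)=(\#(V_3)+2\#(B_v)+2\#(B_e))/2$ by double-counting segment endpoints, then substitute $c_1=1/2$, $c_2=1$ into the cost expression preceding the lemma to conclude that the cost equals $\#(s)$ plus the graph-dependent constant $2\#(V_4)+\#(V_3)$. Your extra touches---invoking integrality of the min-cost flow and spelling out the port-direction case analysis (degree-$4$ and straight degree-$2$ vertices contributing no endpoints, bends contributing two)---only make explicit what the paper leaves implicit.
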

\begin{proof}  
Let $\mathcal{F}$ be a valid flow in $H$ and let $\Gamma$ be the
corresponding orthogonal layout of $G$. Let $\#(V_4)$, $\#(V_3)$,
$\#(B_v)$, $\#(B_e)$ denote the number of degree-$4$ vertices in
$G$, the number of degree-$3$ vertices, the number of vertex bends
in $\Gamma$, and the number of edge-bends in $\Gamma$, respectively.
We count the number of segments in $\Gamma$. Each segment in $\Gamma$
has two end points, and each end-point of a segment in $\Gamma$ is
a degree-$3$ vertex or a vertex-bend or an edge-bend. 
Conversely, each vertex or edge bend corresponds to the end point of
exactly two segments, while each degree-$3$ vertex corresponds to
the end point of exactly one segment. Thus the number of segments
$\#(s)$ in $\Gamma$ is $(\#(V_3)+2\#(B_v)+2\#(B_e))/2$. Now with
$c_1=1/2$, $c_2=1$, the cost function of $H$ is (from the previous
paragraph):
$(4\#(V_4)+3\#(V_3)+2\#(B_v))/2+\#(B_e)=(2\#(V_4)+\#(V_3))+(\#(V_3)+2\#(B_v)+2\#(B_e))/2=k+\#(s)$,
where $(2\#(V_4)+\#(V_3))=k$ is a constant for any orthogonal drawing
$\Gamma$ of $G$. Hence for $c_1=1/2$, $c_2=1$, minimizing the
cost function also minimizes the number of segments in $\Gamma$.
\qed
\end{proof}

Note that $H$ is a planar network with size linear to the number of vertices in~$G$. Cornelsen and Karrenbauer~\cite{cornel} showed that if the total cost is linear, then a minimum-cost flow on a planar network can be computed in $O(n^{1.5}\log n)$ time. Thus summarizing the results in this section, we have the following theorem.

\begin{theorem}
\label{th:MSO} For an embedded connected $n$-vertex planar graph,
$G$, with maximum degree 4 and no vertex of degree $1$,
a minimum segment orthogonal drawing of $G$ can be computed 
in $O(n^{1.5}\log n)$ time.
\end{theorem}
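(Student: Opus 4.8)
The plan is to chain together the three lemmas of this section into a single construction. Given the embedded graph $G$, I would build the modified flow network $H$ described above and compute a minimum-cost flow in $H$ using the edge costs $c_1 = 1/2$ and $c_2 = 1$. Lemma~\ref{lem:flow} then guarantees that this flow corresponds to an angle assignment $\Phi$ for $G$ satisfying Conditions $(P_1$--$P_2)$ of Lemma~\ref{lem:tamassia}, and Lemma~\ref{lem:MSO} guarantees that, for precisely these cost values, a cost-minimizing flow yields an angle assignment realizing a minimum-segment drawing. A final appeal to Lemma~\ref{lem:tamassia} converts $\Phi$ into explicit vertex coordinates in linear time, producing the desired MSO-drawing. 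So the correctness of the output is already settled by the preceding lemmas, and the real content of the theorem is the running-time claim.

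It therefore remains to bound the running time, which reduces to checking that the Cornelsen--Karrenbauer result~\cite{cornel} applies to $H$. First, $H$ is planar with $O(n)$ vertices and edges: its vertices are the boundary- and face-vertices of the plane graph $G$, and its dual and angle edges are routed without crossings alongside the embedding of $G$. Second, the relevant \emph{linear total cost} condition holds: by the computation in the proof of Lemma~\ref{lem:MSO}, the cost of any feasible flow is $k + \#(s)$ with $k = 2\#(V_4) + \#(V_3) = O(n)$, and the minimum number of segments is $O(n)$ (it is bounded, for instance, by the segment count of any standard orthogonal drawing of $G$ with $O(n)$ bends), so an optimal flow has cost $O(n)$. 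These are exactly the hypotheses under which~\cite{cornel} computes a minimum-cost flow on such a planar network in $O(n^{1.5}\log n)$ time.

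The one point that needs care --- and the only genuine obstacle in an otherwise routine assembly --- is the fractional dual-edge cost $c_1 = 1/2$, since the linear-cost flow bound is most naturally stated for integer costs. I expect to dispatch this by scaling: replacing $(c_1,c_2) = (1/2, 1)$ by $(1, 2)$ multiplies every flow's cost by $2$, hence leaves the set of minimum-cost flows unchanged while keeping the optimal total cost $O(n)$, so the $O(n^{1.5}\log n)$ bound transfers verbatim. Finally, I would note that the two hypotheses on $G$ are exactly what make $H$ well-posed: connectivity keeps $H$ a single balanced flow instance, and the absence of degree-$1$ vertices (equivalently, minimum degree $2$) ensures every vertex angle lies in $\{\pi/2,\pi,3\pi/2\}$, so that the flow-to-angle correspondence of Lemma~\ref{lem:flow} is exhaustive. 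Combining these observations completes the proof.
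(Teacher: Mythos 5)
Your proposal is correct and follows essentially the same route as the paper: chain Lemma~\ref{lem:flow} and Lemma~\ref{lem:MSO} with Lemma~\ref{lem:tamassia}, then invoke the Cornelsen--Karrenbauer~\cite{cornel} planar min-cost flow bound on the linear-size planar network $H$. You are in fact more careful than the paper's own (very terse) proof, which asserts the $O(n^{1.5}\log n)$ bound without explicitly verifying the linear-total-cost hypothesis or addressing the fractional cost $c_1 = 1/2$; your scaling to integer costs $(1,2)$ disposes of the latter cleanly.
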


\subsection{Computing Exact Drawing: Aligning Multiple Segments Together}

In Section~\ref{sec:modification}, we show how to compute an orthogonal
representation for a planar graph by computing the angle assignment
for the angles around each vertex and by specifying the edge bends.
This is a topological description of a topologically equivalent
class of orthogonal drawings, but such an orthogonal drawing is
dimensionless. In order to create an exact orthogonal drawing from
such a  representation, Tamassia~\cite{Tam87} first augments the
layout into a rectangular layout by decomposing each non-rectangular
face into rectangles with the introduction of dummy edges and
vertices. Then integer lengths are assigned to the sides of each
rectangular face to obtain the exact coordinate. 

As an additional extension to Tamassia's algorithm, 
we give here an algorithm to augment an orthogonal representation into
a rectangular layout such that the maximum number of pairs of
segments align with each other, thereby extending the ``visual
flow'' across faces in a drawing. Let $\Gamma$ be an orthogonal
layout of a plane graph $G$ with the maximum degree 4, and let
$\Gamma^*$ be rectangular layout containing $\Gamma$.  
A pair of vertices $v_1$, $v_2$ are said
to be \textit{aligned together} if they belong to the same maximal
line-segment in $\Gamma^*$, but do not belong to the same maximal
line-segment in $\Gamma$.  
and both $l_1$, $l_2$ belong to the same maximal line-segment in
$\Gamma^*$.  Formally we address the following problem:

\begin{itemize} 
\item \textbf{(Augmentation):} Given an orthogonal layout $\Gamma$ of a plane graph $G$ with the maximum degree 4, construct a rectangular layout $\Gamma^*$ such that $\Gamma$ is contained in $\Gamma^*$ and the maximum number of pairs of vertices from $G$ are aligned together in $\Gamma^*$.
\end{itemize}

We solve the Augmentation problem
 by computing a maximum matching for a bipartite graph constructed for each non-rectangular face of $\Gamma$ independently. For each face $f$ we construct a bipartite graph $B(f)$ whose edges represent potential pairs of vertices which can be aligned together inside $f$. We need to select a maximum matching on $B(f)$ in such a way that all the pairs of vertices corresponding to these matching edges can be aligned together simultaneously. In Appendix~\ref{sec:app-proofs} we show that such a matching can be computed in polynomial time, using the Hopcroft-Karp algorithm~\cite{HK73} for example. Doing this for each non-rectangular faces in $\Gamma$ gives an augmentation $\Gamma^*$ of $\Gamma$ into a rectangular layout with the maximum pairs of vertices aligned together; see Appendix~\ref{sec:app-proofs}.

\begin{theorem}
\label{th:augmentation} The Augmentation problem 
can be solved in polynomial time.
\end{theorem}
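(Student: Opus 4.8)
The plan is to reduce the global Augmentation problem to a collection of independent, purely local optimization problems, one per non-rectangular face, and to solve each local problem by a maximum-matching computation. First I would observe that two vertices can become \emph{aligned together} only across the interior of a single face: the new maximal segment on which they both lie must pass through a region that, in $\Gamma$, is the interior of one incident face $f$, since $\Gamma$ is planar and the segment cannot cross an existing edge. Moreover, Tamassia's rectangularization (which I inherit from~\cite{Tam87}) inserts dummy edges and vertices \emph{inside} each non-rectangular face, and such insertions inside $f$ change neither the shape of any other face nor the outer boundary. Hence the choices made in different faces are independent, and the number of aligned pairs in $\Gamma^*$ is the sum over faces of the pairs aligned inside each face. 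It therefore suffices to maximize, for each non-rectangular face $f$ separately, the number of vertex pairs aligned inside $f$, and then to assemble the results.

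Next I would model the per-face problem as a matching problem. A vertex $v$ on $f$ can participate in an alignment only if it has a free port facing into $f$, so I would classify such vertices as left-, right-, top-, or bottom-open. An alignment is realized by a chord through the interior joining two opposite free ports: a \emph{horizontal pair} joins a left-open vertex to a right-open one, and a \emph{vertical pair} joins a top-open vertex to a bottom-open one. I would then build the bipartite graph $B(f)$ whose edges are exactly these candidate pairs. Since each vertex lies on at most one aligning chord, a set of simultaneously realizable alignments is a matching in $B(f)$; but it must additionally be realizable without conflicts, i.e.\ no two equally oriented chords whose endpoints interleave in the cyclic order of $\partial f$ may be chosen together. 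So the quantity I want to maximize for $f$ is the size of a maximum \emph{planar} (non-conflicting) matching of $B(f)$.

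The main obstacle is reconciling ``maximum'' with ``planar'': a priori the non-conflict restriction could force a strictly smaller matching, which would make a plain maximum-matching routine insufficient. The key claim I would establish is that this restriction is free, namely that every maximum matching of $B(f)$ can be converted into a planar one of the same cardinality. I would prove this by a chord-uncrossing exchange: if two chosen pairs of the same orientation, say $(a,c)$ and $(b,d)$ with $a,b,c,d$ appearing in this cyclic order around $f$, interleave, then replacing them by $(a,d)$ and $(b,c)$ again yields two pairs of the same orientation (the port types are preserved by the swap), keeps the matching maximum, and strictly decreases the number of interleaving pairs; iterating terminates in a maximum planar matching. Given this, the algorithm is immediate: compute a maximum matching of $B(f)$ with Hopcroft--Karp~\cite{HK73} in polynomial time, uncross it, and draw the resulting non-interleaving chords inside $f$, placing a dummy vertex wherever a horizontal chord and a vertical chord cross, which is permitted.

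Finally I would argue feasibility of rectangularization and assemble the running time. The chosen chords realize only a subset of the subdivisions of $f$; any reflex corner not resolved by an aligning chord can still be split off by an ordinary dummy edge that aligns no pair of real vertices, exactly as in Tamassia's construction, so a valid rectangular layout $\Gamma^*$ containing $\Gamma$ always exists regardless of which alignments are selected. Performing this for every non-rectangular face, whose boundary lengths sum to $O(n)$, and invoking the per-face matching and uncrossing steps, solves the Augmentation problem in polynomial time. The step I expect to require the most care is the uncrossing argument---verifying that the endpoint swap preserves the horizontal/vertical type of each pair and that the interleaving count is a strictly decreasing potential---since everything else is bookkeeping layered on top of a standard bipartite matching.
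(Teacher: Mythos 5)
Your proposal is correct and follows essentially the same route as the paper's proof: a per-face bipartite graph $B(f)$ on left/right/top/bottom-open vertices, a maximum matching via Hopcroft--Karp, the same uncrossing exchange $\{(a,c),(b,d)\}\to\{(a,d),(b,c)\}$ to obtain a planar maximum matching of equal size, and dummy vertices at crossings between horizontal and vertical chords. The only differences are that you make explicit two points the paper leaves implicit---the independence of the per-face subproblems and the completion of the rectangularization by ordinary (non-aligning) dummy edges---which strengthens rather than changes the argument.
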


\subsection{General Planar Graphs with Maximum Degree~4}
\label{sec:general}

In Section~\ref{sec:modification} we considered planar graphs with maximum degree~4 and with no degree-1 vertex. Here we consider all planar graphs with the maximum degree~4. Let $G$ be planar graph with the maximum degree~4. By repetitively deleting vertices of degree 1 from $G$, we get a subgraph $G'$ of $G$ with no vertex of degree 1 (possibly with a single vertex, in which case $G$ is a tree). Call $G'$ the \textit{\closure} of $G$. Call the vertices of $G'$ with at least one adjacent vertex in $G$, not belonging to $G'$ the \textit{connection vertices} of $G$. From $G$, delete all vertices of $G'$.
Each connected component of the resulting graph is then a tree. Given a general planar graph $G$ with the maximum degree~4, we therefore draw the {\closure} of $G$ using the algorithm in Section~\ref{sec:modification}, and attach the remaining tree parts to the connection vertices. For drawing a rooted tree $T$ (rooted at a connection vertex) with the maximum degree 4, we first draw an arbitrary root-to-leaf path in a single segment. We then repetitively take a shortest path from a leaf to vertex already drawn, and draw the path in a segment to complete a minimum segment drawing of $T$. Finally we attach the drawing of each tree part to the drawing of the {\closure} of $G$ at the corresponding connection vertex to get a minimum segment drawing for the embedding given by the drawing; see Fig.~\ref{fig:example03}(b). The details are in Appendix~\ref{sec:app-proofs}.

\begin{theorem}
\label{th:general} For an $n$-vertex planar graph, $G$, with maximum degree 4, an orthogonal drawing $\Gamma$ of $G$ can be obtained in polynomial time such that $\Gamma$ is a minimum segment orthogonal drawing for the embedding of $G$ in $\Gamma$.
\end{theorem}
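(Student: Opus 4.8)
The plan is to convert the number of segments into a bend count, draw the \closure{} of $G$ optimally with Theorem~\ref{th:MSO}, and glue on bend-free drawings of the peeled-off trees, arguing that the total number of bends is simultaneously minimized for the embedding $\Gamma$ realizes. First I would extend the counting identity behind Lemma~\ref{lem:MSO} to graphs containing degree-$1$ vertices. In any orthogonal drawing, a degree-$1$ vertex and a degree-$3$ vertex are each the endpoint of exactly one segment, a vertex-bend and an edge-bend are each the endpoint of exactly two segments, while degree-$2$ straight vertices and degree-$4$ vertices are interior to their segments. Counting endpoints two per segment gives $\#(s)=(\#(V_1)+\#(V_3)+2\#(B_v)+2\#(B_e))/2$, where $\#(V_1)$ is the number of degree-$1$ vertices. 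Since $\#(V_1)$ and $\#(V_3)$ depend only on $G$, minimizing the number of segments for a fixed embedding is equivalent to minimizing the total number of bends $\#(B_v)+\#(B_e)$.

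Next I would carry out the decomposition. Peeling degree-$1$ vertices leaves the \closure{} $G'$ (minimum degree $\ge 2$) and a forest of trees, each rooted at a connection vertex. Each tree $T$ can be drawn with \emph{no} bend by the path-peeling routine: draw one root-to-leaf path as a single segment and then repeatedly route a shortest path from an undrawn leaf to an already-drawn vertex out of a free port. Since $\Deg(G)\le 4$, at every branch point the two collinear through-edges occupy an opposite pair of ports and each branch takes a perpendicular free port, so neither an edge-bend nor a degree-$2$ vertex-bend is ever forced, and $T$ uses exactly $(\#(V_1(T))+\#(V_3(T)))/2$ segments. For any embedding this is optimal on the tree part, because a tree edge or a tree-internal degree-$2$ vertex can always be straightened, so $0$ is a lower bound that $\Gamma$ attains there.

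The hard part is proving global minimality for the embedding $\mathcal{E}$ that $\Gamma$ realizes, and the whole difficulty is concentrated at the connection vertices. The benign facts are that every degree-$3$ vertex is forced to the angle pattern $\{\pi/2,\pi/2,\pi\}$ and every degree-$4$ vertex to four $\pi/2$'s, so they contribute no bends and a fixed endpoint count in \emph{every} drawing. The subtlety is that a connection vertex has $G'$-degree $2$ but $G$-degree $3$ or $4$: attaching its tree converts what would be a \emph{paid} vertex-bend of the \closure{} into a \emph{free} corner of a higher-degree vertex. Consequently a plain standalone-optimal \closure{} drawing can be suboptimal for $G$, since the \closure{} flow is indifferent between placing a corner at a connection vertex and paying an edge-bend on an incident cycle, whereas $G$ strictly prefers the former; for the fixed $\mathcal{E}$ a competitor could then beat $\Gamma$. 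I therefore expect the main obstacle to be setting up the \closure{} computation so that corners at connection vertices are charged cost $0$ (for instance by locally zeroing the angle-edge cost $c_2$, or by giving a connection vertex its $G$-degree consumption), and then showing, by casing on the $G'$-degree $\in\{2,3\}$ and $G$-degree $\in\{3,4\}$ of each connection vertex together with Condition $(P_2)$, that any competing drawing with embedding $\mathcal{E}$ either matches $\Gamma$'s angle choice at the vertex or is forced to pay a compensating edge-bend on an incident \closure{} edge. Combining this per-vertex analysis with the optimality of the suitably charged \closure{} drawing and the bend-freeness of the trees shows $\#(B_v)+\#(B_e)$ is minimized, hence $\Gamma$ attains the minimum number of segments for $\mathcal{E}$.
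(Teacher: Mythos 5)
Your proposal follows the paper's own outline---peel $G$ down to its \closure{} $G'$, draw $G'$ by the flow network of Section~\ref{sec:modification}, draw the peeled-off trees bend-free by path-peeling, attach them at free ports, and reduce segment minimization to bend minimization via the endpoint-counting identity (which you correctly extend to degree-$1$ vertices)---but it diverges at exactly the point where the paper's proof is weakest, and your instinct there is right. The paper's proof simply asserts that optimality follows because the closure drawing minimized its own vertex and edge bends and the tree parts have none. This glosses over the mismatch you isolate: a corner at a connection vertex of $G'$-degree $2$ is charged as a vertex-bend by the closure objective but is free in $G$'s objective, since in $G$ that vertex has degree $3$ or $4$, and whichever pair of its incident edges is collinear, its contribution to the segment count of $G$ is the same fixed amount. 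Because the unmodified network prices such a corner and an edge bend identically (each costs $1$ under Lemma~\ref{lem:MSO}), the min-cost flow may legitimately return a closure drawing with the connection vertex straight and a compensating edge bend, which is strictly worse for $G$: for a $4$-cycle with one pendant edge at $v_1$, the rectangle with corners at all four cycle vertices yields $4$ segments, while the equally flow-optimal rectangle having $v_1$ straight and one edge bend yields $5$, and both realize the same embedding. So your local cost modification is not merely a convenience; it (or an equivalent tie-breaking argument) is needed to make the construction sound, and in this respect your write-up is more careful than the paper's. Two remarks on your plan. First, it ends more easily than you expect: once the angle-edge costs at connection vertices are zeroed, the flow cost equals $G$'s bend objective restricted to $G'$ up to an additive constant, so the concluding per-vertex case analysis is unnecessary---any competing drawing $D$ with the same embedding restricts to a valid closure drawing of that embedding, hence by Lemma~\ref{lem:flow} its closure contribution is at least that of $\Gamma$, its tree contribution is at least $0$, and optimality follows in one line. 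Second, your parenthetical alternative of giving a connection vertex its $G$-degree consumption does not work literally: a $G'$-degree-$2$ vertex has only two angle edges in the closure network, and forcing a consumption of $2$ there would force both of its closure angles to $\pi/2$, violating condition $(P_1)$; the zero-cost variant is the correct form of the fix.
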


\section{Minimum Segment Cover Orthogonal Drawing}

Let $\Gamma$ be an orthogonal drawing for a planar graph, $G$, with
maximum degree 4. A
set of segments $S$ in $\Gamma$ is said to \textit{cover} $G$, if
each vertex in $G$ is on at least one of the segments in $S$. The
\textit{segment-cover number} of $\Gamma$ is the minimum cardinality
of a set of segments covering $G$. Given a planar graph $G$, 
with maximum degree 4, a
\textit{minimum segment cover orthogonal drawing} or \textit{MSCO}-drawing
of $G$ is an orthogonal drawing that has the minimum cardinality
of a set of segments covering $G$. The number of segments in such
an optimal set is also called the \textit{segment-cover number} of
$G$.

\subsection{\NP-Hardness}
We begin our discussion in this section by showing
that finding a minimum segment cover orthogonal drawing
is \NP-hard, even for planar graphs with maximum degree 3. 
\ifFull
We then
address the problem for trees and for a subclass of such graphs:
series-parallel graphs with maximum degree 3. For both graph
classes, we give algorithms for an orthogonal drawing of any
series-parallel planar graph with maximum degree 3 
that achieves the optimum segment-cover
number with the restriction that the drawing is upward.
\fi

\begin{theorem}
\label{thm:np}
Finding a minimum segment cover orthogonal drawing for a 
planar graph $G$ is \NP-hard, even if $G$ is a planar graph with
maximum degree 3.
\end{theorem}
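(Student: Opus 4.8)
The plan is to establish NP-hardness by a reduction from a known NP-hard problem whose combinatorial structure naturally encodes the binary choice of a vertex's orientation in an orthogonal drawing. The most promising candidate is \textsc{Planar 3-SAT} (or a variant such as \textsc{Planar Monotone 3-SAT}), since it is NP-hard and its planar incidence structure can be laid out in the plane in a way that matches the planar, degree-3 constraint on $G$. The key observation driving the reduction is that in an orthogonal drawing, each vertex of degree $3$ is necessarily an endpoint of exactly one segment (as already established in the proof of Lemma~\ref{lem:MSO}), and each degree-$2$ vertex is free to be either a \emph{vertex-bend} (splitting two segments) or a straight pass-through (extending a single segment). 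This free binary choice at degree-$2$ vertices is the gadget primitive I would exploit: whether a chain of vertices lies on one long segment or is broken into several short ones encodes a truth value or a consistency constraint.

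First I would construct, for each variable of the SAT instance, a \emph{variable gadget}: a small cyclic or path-like subgraph of maximum degree $3$ whose orthogonal drawing admits exactly two minimum-segment-cover configurations, corresponding to the truth assignment $\mathtt{true}$ or $\mathtt{false}$. The idea is that one configuration routes a long covering segment in one direction (say horizontally) and the complementary configuration routes it in the other, and any deviation from these two states forces at least one extra segment into the cover. Next I would build \emph{clause gadgets} that connect to three variable gadgets and are designed so that the clause gadget can be covered within the target budget if and only if at least one of its incident literals is satisfied; the natural mechanism is to have a shared covering segment that can ``reach'' the clause gadget only when a literal segment is aligned favorably. I would then set a global target segment-cover number $k$ equal to the sum of the per-gadget optima, so that achieving $k$ is possible exactly when a satisfying assignment exists.

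The main obstacle, and the step requiring the most care, is controlling the \emph{geometry and embedding} of the reduction. Because the graph is planar and we are choosing the drawing (over all embeddings, or for a fixed embedding as the theorem permits), I must ensure that the only way to meet the segment-cover budget $k$ is through the ``intended'' orthogonal configurations, ruling out clever alternative drawings that cheat by aligning unrelated vertices onto a common long segment across faces. This forces me to argue a tight \emph{combinatorial lower bound} on the segment-cover number of each gadget that holds in \emph{every} valid orthogonal drawing, not just the canonical one; the counting argument from Lemma~\ref{lem:MSO} relating segment endpoints to degree-$3$ vertices and bends will be the backbone of this lower bound. I would also need to verify that the variable and clause gadgets can be wired together respecting planarity and maximum degree $3$, most likely by inserting degree-$2$ subdivision vertices along connecting wires so that no vertex is forced above degree $3$ and so that wires themselves can be drawn as single covering segments without consuming budget.

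Finally, I would complete the argument with the two standard directions. For soundness I show that a satisfying truth assignment yields an orthogonal drawing of $G$ whose segment cover has size exactly $k$, by placing each variable gadget in its assignment-consistent configuration and checking that every clause gadget is covered within budget. For completeness I show the converse: any orthogonal drawing of $G$ with a segment cover of size at most $k$ must, by the per-gadget lower bounds, place each variable gadget in one of its two canonical states (yielding a consistent assignment) and must satisfy every clause, whence the SAT instance is satisfiable. Since the reduction has polynomial size and all gadgets have maximum degree $3$ and are planar, the NP-hardness claim of Theorem~\ref{thm:np} follows.
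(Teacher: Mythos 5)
Your proposal is a plan, not a proof, and the plan leaves exactly the hard parts open. You never exhibit a variable gadget or a clause gadget; you only stipulate the properties they would need (exactly two minimum configurations, a covering segment that ``reaches'' the clause iff a literal is satisfied, per-gadget lower bounds valid in \emph{every} orthogonal drawing). For this particular problem those properties are genuinely difficult to achieve, because a segment cover is not modular: a single maximal segment can pass through and cover vertices of many gadgets at once, so a global budget $k$ set as the ``sum of per-gadget optima'' does not decompose into independent per-gadget constraints. You acknowledge this obstacle (drawings that ``cheat by aligning unrelated vertices onto a common long segment across faces'') but offer no mechanism to rule it out; the endpoint-counting idea from Lemma~\ref{lem:MSO} bounds the total number of segments in a drawing, not the number of segments needed to \emph{cover} the vertices, so it does not supply the lower bound your completeness direction requires. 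As written, the reduction could not be checked or even instantiated.

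The paper's proof sidesteps all of this with a far simpler reduction: from the Hamiltonian path problem, which is NP-complete for planar graphs of maximum degree 3. The observation is that such a graph $G$ has segment-cover number $1$ if and only if it has a Hamiltonian path. One direction is immediate (a segment covering all vertices traces a Hamiltonian path); for the other, one draws the Hamiltonian path $P$ as a single horizontal segment and routes the remaining edges as orthogonal polylines respecting a fixed planar embedding, which is possible precisely because every vertex has degree at most~$3$, leaving each vertex's top or bottom port free for its one non-path edge. No gadgets, no budget accounting, no lower-bound machinery. If you want to salvage your approach, the lesson is to look for a source problem whose yes-instances correspond to an extreme value of the target parameter (here, cover number exactly $1$), rather than engineering a SAT reduction whose correctness hinges on geometric non-interference you cannot enforce.
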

\begin{proof} 
\ifFull
We prove the \NP-hardness of the problem by a reduction from the
\textit{Hamiltonian path problem}~\cite{Garey:1990:CIG:574848}, 
which asks whether a given graph
$G$ has a path that contains each vertex of $G$ exactly once (i.e.,
a \textit{Hamiltonian path}). 
The Hamiltonian path problem remains
\NP-complete for planar graphs with maximum degree 3~\cite{doi:10.1137/0205049}.
Let $G$ be such a planar graph. We show that $G$ has a Hamiltonian path
if and only if its segment-cover number is 1.

Clearly, if $G$ has an orthogonal drawing that contains a segment
covering all the vertices, then the path of $G$ corresponding to
this segment is a Hamiltonian path. Conversely if $G$ has a Hamiltonian
path $P$, then we can find an orthogonal drawing of $G$ with a
single segment covering all the vertices as follows. Take a fixed
planar embedding $\emb(G)$ of $G$, draw $P$ as a single segment
(say a horizontal segment) and route the remaining edges as orthogonal
polylines with necessary bends keeping the embedding $\emb(G)$
fixed. This gives a valid orthogonal drawing since each vertex has
a maximum degree 3, and hence from each vertex $v$, its incident
edges on $P$ are drawn to its left and right, while the only other
edge (if any) is drawn to its top or bottom without any overlap.
\else
We prove the \NP-hardness of the problem by a reduction from the
\textit{Hamiltonian path problem}~\cite{Garey:1990:CIG:574848};
please see the appendix for details.
\fi
\qed
\end{proof}

\subsection{Algorithm for Series-Parallel Graphs with Maximum Degree 3}
\label{sec:SP}

A {\em series-parallel} graph is a graph $G$ that has two terminals $s$ and
$t$, and either $G$ is an edge $(s,t)$, or it has been constructed
in one of the following ways:
(1) (Parallel combination)
$G$ consists of two or more 
series-parallel graphs that all have the terminals $s$ and $t$.  (2)
(Combination in series)
$G$ consists of two or more series-parallel graphs $G_1, G_2, \ldots, G_f$ such that each $G_i$, $i\in\{1, 2, \ldots, f\}$ has terminals $\{s_i,t_i\}$ with terminals $t_i=s_{i+1}$ for $i\in\{1, 2, \ldots, f\}$ and $s_1=s$, $t_f=t$.

The recursive construction of a series-parallel graph is represented by an \textit{SPQ-tree}, which is a rooted tree defined as follows. An SPQ-tree $\mathcal{T}$ for a series-parallel graph $G$ contains at most three types of nodes: $S$-, $P$-, and $Q$-nodes. Informally each edge of $G$ corresponds to a $Q$-node, while an $S$-node and a $P$-node denote a series and a parallel combination, respectively. If $G$ contains a single edge, then $\mathcal{T}$ consists of a single $Q$-node. Otherwise if $G$ is obtained by a parallel (resp. series) combination of two or more series-parallel graphs $G_1, G_2, \ldots, G_f$, then $\mathcal{T}$ is rooted at a $P$-node (resp. $S$-node) with the SPQ-trees for $G_1, G_2, \ldots, G_f$ as children subtrees of the root.

In this section we consider series-parallel graphs with maximum degree $3$. For each such graph, Lemma~\ref{lem:SP3} gives some properties of its SPQ-tree, which is implied from~\cite{SAAR08}.

\begin{lemma}~\cite{SAAR08}
\label{lem:SP3} Let $G$ be a series-parallel graph with the maximum degree 3 and let $\mathcal{T}$ be its SPQ-tree. Then the following two conditions hold:
\begin{itemize}
\item For each $S$-node $x$ of $\mathcal{T}$ with children nodes $x_1, x_2, \ldots, x_f$, (i) each $x_i$, $i\in\{1,2,\ldots, f\}$ is either a $P$-node or a $Q$-node, (ii) both $x_1$ and $x_f$ are $Q$-nodes, and (iii) if $x_i$ is a $P$-node for some $i\in\{2,\ldots, f-1\}$, then both $x_{i-1}$ and $x_{i+1}$ are $Q$-nodes.

\item Each non-root $P$-node has exactly two children and a root $P$-node (if any) has at most three children. At most one of the children of a $P$-node is a $Q$-node, and the remaining children are $S$-nodes.
\end{itemize}
\end{lemma}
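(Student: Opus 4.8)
The plan is to prove both bullets by structural induction on the SPQ-tree $\mathcal{T}$, carrying as the main bookkeeping device, for every node, the pair of \emph{terminal degrees}: the numbers $d_s$ and $d_t$ of edges incident to the two poles $s$ and $t$ \emph{inside} the subgraph that node represents. The base case is a single $Q$-node, with $(d_s,d_t)=(1,1)$. For the inductive step I would record how the two operations transform these numbers. A parallel combination adds the left-degrees of its children and, separately, their right-degrees; since a $P$-node has at least two children and each contributes at least one edge at each pole, every $P$-node has $d_s,d_t\ge 2$. A series combination inherits $d_s$ from its first child and $d_t$ from its last child, while each internal \emph{merge vertex} $t_i=s_{i+1}$ receives degree equal to the right-degree of $x_i$ plus the left-degree of $x_{i+1}$. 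The maximum-degree-$3$ hypothesis then enters as one uniform constraint: every merge vertex, and each final pole, has degree at most $3$.

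Two of the claims are purely structural, and I would dispatch them first. Since two nested series (respectively parallel) combinations always flatten into one, the canonical SPQ-tree has no $S$-node with an $S$-child and no $P$-node with a $P$-child; this is exactly (i), and it also reduces the last sentence of the second bullet to showing that a $P$-node has at most one $Q$-child. The latter is forced by simplicity, because two $Q$-children of a $P$-node would be two parallel edges joining the same pair of poles.

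The quantitative claims all run on the same engine — a $P$-node pole carries at least $2$, so placing it at a merge vertex is expensive — and I would prove (iii) together with the non-root part of the $P$-node child-count. For (iii): if an internal child $x_i$ is a $P$-node, then each merge vertex flanking $x_i$ already carries at least $2$ from $x_i$; were a neighbour $x_{i\pm1}$ also a $P$-node it would contribute another $\ge 2$, making that merge vertex have degree $\ge 4$, a contradiction, so the neighbours must be $Q$-nodes. For the child-count, once a $P$-node sits strictly inside an $S$-node its two poles are merge vertices, so a $P$-node with $k$ children forces a pole degree of at least $k$, and the $\ge 1$ contributed by the adjacent sibling gives $k+1\le 3$, hence $k=2$; a root $P$-node has final poles and so only needs $k\le 3$.

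The main obstacle is the extreme-child claim (ii), because it is the one place where the local degree count has slack: when a $P$-node sits at a \emph{pole} of an $S$-node rather than at an internal merge vertex, that pole is not yet shared with a series-neighbour and may legitimately reach degree $3$ without any immediate contradiction. My plan is to close (ii) by pushing the ``$+2$ from a $P$-node'' collision up one level: I would strengthen the induction hypothesis to mark which poles of a node are \emph{saturated}, note that an extreme $P$-child saturates the corresponding pole of the $S$-node, and then use that this pole is shared again under the parent. Whenever that parent is a non-root $P$-node — whose exact two-child structure I already have from the second bullet — the sibling contributes a mandatory extra edge and forces degree $\ge 4$, the desired contradiction. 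The genuinely delicate residual case is a pole at the very top of $\mathcal{T}$, where no further sharing occurs; there I would invoke the degree budget of the graph's own poles together with the orientation fixed by the upward drawing to rule out an extreme $P$-child. Threading this propagation through the mutual dependence between (ii) and the $P$-node child-count, so that both are established in a single inductive step, is the part of the argument on which I would spend the most care.
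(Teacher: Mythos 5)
First, a point of reference: the paper itself contains no proof of Lemma~\ref{lem:SP3} --- it is imported wholesale from~\cite{SAAR08} (``which is implied from\ldots'') --- so your attempt has to be judged against the statement itself rather than against a proof in the paper. Most of your plan is sound and is the standard argument: canonicity of the SPQ-tree (maximal series and parallel combinations) gives (i) and rules out $P$-node/$P$-node parent--child pairs; simplicity of $G$ rules out two $Q$-children of a $P$-node; and the merge-vertex degree count correctly yields (iii) and the child bounds. (One small repair: your $k+1\le 3$ argument is phrased only for a $P$-node ``strictly inside'' an $S$-node, but it applies to \emph{every} non-root $P$-node, since an $S$-node has at least two children and hence every child, extreme or not, has at least one pole merged with a sibling.)

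The genuine gap is exactly where you place it --- claim (ii) at a pole that survives to the root --- but it is not merely delicate: it cannot be closed, because (ii) is false as stated. Take the theta graph on $\{u,v,a,b,c\}$ with edges $ua, av, ub, bv, uc, cv$: simple, biconnected, series-parallel, maximum degree $3$. Declare the terminals to be $s=u$ and $t=c$. The canonical SPQ-tree is then a root $P$-node whose children are a $Q$-node (the edge $uc$) and an $S$-node, and that $S$-node's first child is a $P$-node (the $4$-cycle on $u,a,v,b$) --- yet every vertex degree is still at most $3$, the offending pole $u$ receiving exactly $2+1=3$. This sits precisely in the slack you identified, and both of your proposed rescues fail: the ``degree budget of the graph's own poles'' yields no contradiction (degree exactly $3$ is legal there), and invoking ``the orientation fixed by the upward drawing'' is circular, since the lemma is a purely combinatorial statement about $\mathcal{T}$ that the paper later uses to \emph{construct} drawings. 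What is actually true --- and what \cite{SAAR08} effectively relies on --- is that the terminals (equivalently, the rooting of the decomposition) can be re-chosen so that (ii) holds; for the theta graph one must take terminals $u,v$, which produces a root $P$-node with three $S$-children, and this is exactly why the lemma must permit three children at a root $P$-node. So the missing ingredient is not a sharper degree count but a re-rooting lemma (or an explicit hypothesis that the SPQ-tree is rooted appropriately); without it, no amount of induction will prove (ii).
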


A polyline $l$ is \textit{$y$-monotone} if for any horizontal line $h$, the intersection $l\cap h$ is a connected line-segment (possibly empty or a single point). Let $\mathcal{T}$ be the SPQ-tree of a series-parallel $G$ such that the terminals for the root node of $\mathcal{T}$ are $s$ and $t$. Then an orthogonal drawing $\Gamma$ of $G$ is \textit{upward} if any path from $s$ to $t$ in $G$ is drawn as a $y$-monotone polyline in $\Gamma$.

\begin{theorem}
\label{th:SP} Let $G$ be a series-parallel graph with the maximum degree 3 with the SPQ-tree $\mathcal{T}$ and let $\#(P^*)$ be the number of $P$-nodes in $\mathcal{T}$ with at least two $S$-nodes as children. If the root of $\mathcal{T}$ is a $P$-node with three $S$-nodes as children, then $\#(P^*)+2$ segments are necessary and sufficient to cover all vertices of $G$ in any upward orthogonal drawing of $G$; otherwise $\#(P^*)+1$ segments are necessary and sufficient.
\end{theorem}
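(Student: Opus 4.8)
The plan is to prove the two directions---necessity (lower bound) and sufficiency (upper bound)---by induction on the SPQ-tree $\mathcal{T}$, using the notion of a \emph{spine}: a vertical maximal line-segment covering a bottom-to-top chain of vertices. The guiding observation I would isolate first is that the upward condition forces every branch at a $P$-node to be $y$-monotone and to leave its bottom terminal $s_p$ with increasing $y$; hence at $s_p$ at most one incident branch-edge can use the top port and lie on a vertical segment through $s_p$, while the remaining branches must leave through side ports, and symmetrically at the top terminal $t_p$. This is the structural reason each additional $S$-branch at a $P$-node should cost one extra segment.

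For sufficiency I would give an explicit recursive construction. Fix a single $s$--$t$ \emph{main path} by, at each $P$-node on it, continuing into one $S$-child, and draw this path as a single vertical spine covering all of its vertices. At every $P$-node, each non-continued child is attached between the two terminals through side ports and bends: a $Q$-child is a single edge with no interior vertices (already covered by the terminals), while each non-continued $S$-child is drawn recursively as its own vertical spine hanging in the face bounded by the two branches. By Lemma~\ref{lem:SP3} every $P$-node has at least one $S$-child (at most one child is a $Q$-node), so the passing spine can always continue into an $S$-child, and the number of newly created spines at a $P$-node is exactly $(\#S\text{-children})-1$. Summing gives $1+\sum_{P}(\#S\text{-children}-1)$ segments; writing $b$ for the number of $P$-nodes with exactly two $S$-children and $c=1$ when the root is a $P$-node with three $S$-children (else $0$), this equals $1+b+2c$, which is $\#(P^*)+2$ in the special root case and $\#(P^*)+1$ otherwise. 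I would also verify, using maximum degree $3$, that the top/bottom ports are free for each spine and the remaining side ports suffice for the branches, so the construction is a valid upward orthogonal drawing.

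For necessity I would strengthen the claim to an inductive invariant over $\mathcal{T}$ that handles shared terminal segments. For a node $x$ with terminals $s_x$ (bottom) and $t_x$ (top), and for each boundary state recording whether $s_x$ (resp. $t_x$) is already covered by an inherited vertical segment entering $G_x$ from outside, I would lower-bound the number of \emph{internal} segments any upward drawing of $G_x$ must use, and show the resulting recurrence evaluates to the claimed count. The base cases ($Q$- and $S$-paths) are absorbed by a single inherited spine. The crucial step is the $P$-node case: by the preliminary observation, the spines covering the interior vertices of two distinct $S$-branches lie on different vertical lines (the branches occupy opposite sides of the face they bound) and hence are distinct segments that cannot be merged through a terminal, since two up-going branch-edges at a terminal are never collinear.

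The main obstacle is precisely this lower bound, and within it the bookkeeping of shared terminal segments. As the $4$-cycle already shows, a $P^*$-node need not contribute a dedicated branch-spine: a single \emph{horizontal} segment through a terminal can absorb one interior neighbour from each branch, trading a branch-spine for covering the terminal differently while leaving the total unchanged. The proof therefore cannot use a naive per-$P^*$ charge; the inductive invariant must track which terminal-covering segments are vertical-and-shareable versus horizontal-and-local, and check that over all combinations the minimum is exactly $1+b+2c$. The final piece is the root analysis: the absence of an inherited spine below $s$ and above $t$ is what turns the third $S$-branch of a root $P$-node into a second unavoidable extra segment, producing the $+2$ in place of the generic $+1$.
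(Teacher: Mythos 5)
Your sufficiency half is sound, and it is essentially the paper's own construction viewed globally: the paper builds the same drawing by induction on the SPQ-tree (stacking the children of an $S$-node vertically, placing the two sides of a $P$-node side-by-side with two bends on the non-straight branch, and handling a root $P$-node with three $S$-children by one extra side-by-side step), and its segment count agrees with your $1+\sum_{P}(\#S\text{-children}-1)$ bookkeeping. The one thing you wave at --- that the side-hanging branches never collide or overlap when a spine passes through a $P$-node --- is exactly what part (iii) of Lemma~\ref{lem:SP3} (a $P$-node child of an $S$-node is flanked by $Q$-nodes) is needed for, so it should be invoked there, not only for the weaker fact that every $P$-node has an $S$-child.

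The necessity half, however, is a plan rather than a proof, and the gap is precisely the step you flag yourself. You propose an induction whose invariant tracks boundary states (which terminals are covered by inherited vertical segments, and whether terminal-covering segments are ``vertical-and-shareable'' or ``horizontal-and-local''), but you never define those states, never write down the recurrence at a $P$-node, and the decisive sentence --- ``check that over all combinations the minimum is exactly $1+b+2c$'' --- is the entire content of the lower bound. Worse, your ``preliminary observation'' (two $S$-branches of a $P$-node must carry distinct vertical spines) cannot drive the induction, because, as your own $C_4$ example shows, an optimal upward cover need not contain a per-branch vertical spine at all: interior branch vertices can be absorbed by horizontal segments through terminals, so the invariant must quantify over covers of a shape completely different from the one your construction produces, and nothing in the proposal shows that this case analysis closes. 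By contrast, the paper proves necessity without any state-tracking, via a global counting argument: for each non-root $P$-node with two $S$-node children and terminals $s,t$, any upward drawing forces $s$ and $t$ (or their incident edges) to be endpoints of segments containing vertices; by Lemma~\ref{lem:SP3} distinct $P$-nodes have disjoint terminal pairs, so these forced endpoints are all distinct, and since a segment has only two endpoints this yields the lower bound, with the separate observation that upwardness prevents the $S$-children of a root $P$-node from sharing any covering segment (which produces the $+2$ in that case). Until you either carry out your invariant in full or switch to an endpoint-counting argument of this kind, the ``necessary'' direction of Theorem~\ref{th:SP} remains unproven in your write-up.
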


We prove Theorem~\ref{th:SP} by constructing an algorithm that obtains an upward orthogonal drawing of $G$ with the desired segment cover number, and then proving that this bound is also necessary. Assume first that each $P$-node in the SPQ-tree $\mathcal{T}$ for $G$ (including the root, if it is a $P$-node) has exactly 2 children.

\begin{lemma}
\label{lem:SP-suff} Let $G$ be a series-parallel graph with the maximum degree 3 with the SPQ-tree $\mathcal{T}$ where each $P$-node has exactly 2 children. Let $\#(P^*)$ be the number of $P$-nodes in $\mathcal{T}$ with two $S$-nodes as children. Then there is an upward orthogonal drawing of $G$ with $\#(P^*)+1$ vertical segments covering all the vertices. 
\end{lemma}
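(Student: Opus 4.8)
The plan is to prove the lemma by a recursion on the SPQ-tree $\mathcal{T}$, processing each node after its children and maintaining a strengthened invariant that makes the three combination rules compose cleanly. For a node $x$ with terminals $s$ (bottom) and $t$ (top) representing the subgraph $G_x$, I would produce an upward orthogonal drawing satisfying: (i) there is a distinguished vertical \emph{spine} forming the left boundary of the drawing, passing through both $s$ and $t$, with every other feature lying weakly to its right; (ii) the bottom port of $s$ and the top port of $t$ are free (reserved for the parent combination), while $s$ (resp.\ $t$) attaches to the spine through its top (resp.\ bottom) port; and (iii) all vertices of $G_x$ are covered by exactly $\#(P^*_x)+1$ vertical segments, one of which is the spine, where $\#(P^*_x)$ counts the $P^*$-nodes in the subtree rooted at $x$. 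The lemma is then the case $x=\text{root}$, since the hypothesis forces the root to be a $Q$-node, an $S$-node, or a $P$-node with two children, all of which the recursion handles.

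For the base case, a $Q$-node is a single edge drawn as a vertical spine through $s$ and $t$, giving one covering segment and matching $\#(P^*)+1=1$. For an $S$-node with children $c_1,\dots,c_f$ I would stack the recursively drawn children vertically, identifying the shared terminals $t_i=s_{i+1}$ and aligning all child spines on one vertical line; the child spines then fuse into a single spine for $G_x$, so the covering segments total $1+\sum_i \#(P^*_{c_i})=\#(P^*_x)+1$ (the $+1$ contributed by each child's spine collapses into the one shared spine). By Lemma~\ref{lem:SP3} the only degree-$3$ vertices created at the shared terminals arise where a $Q$-node abuts a $P$-node, and there the top/bottom ports reserved by the invariant exactly absorb the two series connections.

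The two $P$-node subcases are where the argument does the real work. Since each $P$-node has exactly two children and at most one may be a $Q$-node, either (A) one child is the edge $(s,t)$ (a $Q$-node) and the other an $S$-node $B$, or (B) both children are $S$-nodes, i.e.\ $x$ is a $P^*$-node. In case (A) I would draw $B$ on the spine and route the edge $(s,t)$ as a bent polyline lying entirely to the right of $B$, entering $s$ and $t$ through their (free) right ports; since this edge carries no internal vertex it adds no covering segment, so the count stays $\#(P^*_B)+1=\#(P^*_x)+1$. In case (B) I would draw $A$ on the spine (so $s,t$ lie on it) and place the drawing of $B$ entirely to the right, but with $B$'s first and last edges---which are $Q$-nodes by Lemma~\ref{lem:SP3}---\emph{detached} from $s$ and $t$ and rerouted as connectors running horizontally along the bottom and top and reaching $B$'s spine through the right ports of $s$ and $t$. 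Because these detached edges carry no internal vertices, $B$'s spine still covers exactly its internal vertices as one segment, so the two subdrawings together use $(\#(P^*_A)+1)+(\#(P^*_B)+1)$ covering segments; since $\#(P^*_x)=\#(P^*_A)+\#(P^*_B)+1$, this equals $\#(P^*_x)+1$, which is precisely where the extra $+1$ for a $P^*$-node is spent.

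The main obstacle I anticipate is case (B): checking that detaching and rerouting keeps the drawing planar, orthogonal, and upward simultaneously. Crossing-freeness should follow from the invariant that each subdrawing occupies a vertical slab whose extreme bottom and top columns contain only the spine---guaranteed because the first and last children of every $S$-node are $Q$-nodes---so the two connectors can slip past $A$'s slab without meeting its interior. Upwardness should follow by assigning $y$-coordinates so that $s$ is globally lowest and $t$ globally highest and each connector moves only horizontally and then upward; here I would verify that a horizontal connector segment does not break $y$-monotonicity, which is permitted since a horizontal portion meets any horizontal line in a connected set. The remaining checks---that the degree-$3$ port budget at $s$ and $t$ is never exceeded, each vertex using a top/bottom port for the spine, a side port for the second branch, and keeping the opposite top/bottom port for the parent---are bookkeeping that the invariant was designed to make automatic.
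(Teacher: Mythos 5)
Your proposal is correct and follows essentially the same route as the paper's proof: structural induction on the SPQ-tree, stacking the children of an $S$-node so that the vertical segments through identified terminals fuse, placing the two children of a $P$-node side by side with the extreme $Q$-node edges of the right child rerouted via bends, and the identical segment count $\#(P^*_x)+1$ at each step using Lemma~\ref{lem:SP3}. Your explicit left-boundary ``spine'' invariant with reserved top/bottom ports is just a cleaner formalization of what the paper maintains implicitly through upwardness (terminals at the extreme $y$-coordinates) and the flanking-$Q$-node structure.
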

\begin{proof}
We give an inductive construction of an upward orthogonal drawing of $G$. If $G$ contains only a single edge (i.e., the root $r$ of $\mathcal{T}$ is a $Q$-node), then we draw $G$ with a single vertical line segment. Otherwise, if $r$ is either an $S$-node or a $P$-node. If $r$ is an $S$-nodes with children $x_1, \ldots, x_f$, then let $G_1, \ldots, G_f$ be the series-parallel graphs corresponding to the subtrees of $\mathcal{T}$ rooted at $x_1, \ldots, x_f$, respectively. For $i\in\{1,2, \ldots, f\}$, let $s_i, t_i$ be the two terminals of $G_i$ and let $\#(P^*_i)$ denote the number of $P$-nodes with two children $S$-nodes in the subtree rooted at $x_i$. We then recursively draw upward orthogonal drawing $\Gamma(x_i)$ for $G_i$ with $\#(P^*_i)+1$ vertical segments covering all the vertices of $G_i$, for $i\in\{1,\ldots, f\}$. We then obtain an upward orthogonal drawing of $G$ by placing the drawings $\Gamma(x_1)$, $\Gamma(x_2)$, $\ldots$, $\Gamma(x_f)$ one after another from top to bottom and by identifying $t_i$ with $s_{i+1}$ for $i\in\{1,2,\ldots, f-1\}$; see Fig.~\ref{fig:SP}(a). This identification of vertices are possible since the drawings $\Gamma(x_i)$'s are upward (hence $s_i$ and $t_i$ are at the topmost and bottommost $y$-coordinates) for each $i\in\{1,\ldots, f\}$. Furthermore, by Lemma~\ref{lem:SP3}, for each $P$-node child $x_i$, both $x_{i-1}$ and $x_{i+1}$ are $Q$-nodes (and hence are drawn as a vertical line segment). Thus there is no edge overlap in the resulting drawing. Since each drawing $\Gamma(x_i)$ has $\#(P^*_i)+1$ vertical segments covering all the vertices of $G_i$, and for $i\in\{2,\ldots, f\}$, $\Gamma(x_i)$ shares exactly one of these vertical segments with $\Gamma(x_{i-1})$, the total vertical segments covering all the vertices of $G$ is $(\#(P^*_1)+1)+(\#(P^*_2))+\ldots+(\#(P^*_f)) = \#(P^*)+1$.

Finally if the root $r$ of $\mathcal{T}$ is a $P$-node with children $x_1$ and $x_2$, then let $G_1$, $G_2$ be the two series-parallel graphs corresponding to the two subtrees of $\mathcal{T}$ rooted at $x_1$ and $x_2$, respectively. For $i\in\{1,2\}$, let $s_i, t_i$ be the two terminals of $G_i$ and let $\#(P^*_i)$ denote the number of $P$-nodes with two children $S$-nodes in the subtree rooted at $x_i$. We recursively obtain and upward orthogonal drawings $\Gamma(x_1)$ and $\Gamma(x_2)$ of $G_1$, $G_2$, respectively. Let $x_1$ is an $S$ node. Then we place the two drawings $\Gamma(x_1)$, $\Gamma(x_2)$ side-by-side and identify $s_1$, $s_2$ and $t_1$, $t_2$ by making two bends at the very top edge and the very bottom edge in $\Gamma(x_2)$; see Fig.~\ref{fig:SP}(b). There is a only a single edge at the top and the bottom since $x_2$ is either a $Q$-node or an $S$-node, and if $x_2$ is an $S$-node, then by Lemma~\ref{lem:SP3}, its topmost and bottommost part contain a single vertical segment for its children $Q$-nodes. If $x_2$ is an $S$-node, then $\#(P^*) = \#(P^*_1) + \#(P^*_2) +1)$, and the total number of vertical segments covering all vertices of $G$ is $(\#(P^*_1)+1) + (\#(P^*_2)+1) = \#(P^*) +1$. Otherwise, The segment between the two bend points in $\Gamma(x_2)$ does not contain any vertex, and the total number of vertical segments covering all vertices of $G$ is $(\#(P^*_1)+1) = \#(P^*) +1$. \qed
\end{proof}

Fig.~\ref{fig:SP-illus} illustrates an example of a drawing obtained from the algorithm in the proof of Lemma~\ref{lem:SP-suff}. We are now ready to prove Theorem~\ref{th:SP}.

\noindent
\textit{Proof of Theorem~\ref{th:SP}.} If the root $r$ of $\mathcal{T}$ is not a $P$-node, or if it has only two children, then we find an upward orthogonal drawing of $G$ by the algorithm described in the proof of Lemma~\ref{lem:SP-suff}. Otherwise $r$ is a $P$-node with its three children $x_1$, $x_2$, $x_3$ $\mathcal{T}_1$ is the SPQ-tree obtained after deleting the subtree rooted at $x_3$ from $\mathcal{T}$. Also $\mathcal{T}_2$ is the SPQ-tree obtained after deleting the subtrees rooted at $x_1$, $x_2$ from $\mathcal{T}$. Let $G_1$, $G_2$ be the series-parallel graphs corresponding to these two SPQ-trees, respectively. We find an upward orthogonal drawings $\Gamma_1$, $\Gamma_2$ for $G_1$, $G_2$, place them side-by-side and identify their two pairs of terminals by making bends at the very top and the very bottom segments in $\Gamma_2$. This gives an upward orthogonal drawing of $G$ with the desired segment-cover number.

To see that the drawing achieves the optimal segment-cover number, observe the following. Let $x_p$ be an arbitrary non-root $P$-node in $\mathcal{T}$ with two $S$-node children $x_1$, $x_2$. Let $s$ and $t$ be the two terminals corresponding to $x_p$. Then deleting $\{s,t\}$ splits $G$ into exactly three components, two containing vertices of the two series parallel graphs $G_1$, $G_2$ corresponding to the subtrees rooted at $x_1$ and $x_2$, respectively, and the third containing vertices from neither $G_1$ nor $G_2$. In any upward orthogonal drawing of $G$, both $s$ and $t$ (or their incident edges) will be the end point of at least one segments that contains vertices of $G$. By Lemma~\ref{lem:SP3}, no two $P$-nodes share any terminals. Thus if $\#(P_n^*)$ be the number of non-root $P$-nodes, then there are at least $2\#(P_n^*)$ end-points of segments containing vertices, and thus at least $\#(P_n^*)$ number of segments containing vertices in any upward orthogonal drawing. Finally If the the root node $r$ is a $P$ node, then since the drawing is upward, none of the graph corresponding to its each child $S$-node and its descendants can share a segment covering vertices, hence the result follows. \qed

An upward drawing of a series-parallel graph with the maximum degree 3 is not always an MSCO-drawing; see Fig.~\ref{fig:SP-illus}(d). However for any outer-planar graph (which is a series-parallel graph), each $P$-node in the SPQ-tree has exactly one $Q$-node and one $S$-node as children. Since a segment is necessary to cover the vertices of any graph, this observation gives Corollary~\ref{cor:outer} from Theorem~\ref{th:SP}.

\begin{corollary}
\label{cor:outer} Let $G$ be a series-parallel graph with the maximum degree 3 with the SPQ-tree $\mathcal{T}$ and let $\#(P^*)$ be the number of $P$-nodes in $\mathcal{T}$ with no $Q$-node as a child. Then  $\#(P^*)+1$ segments are sometimes necessary and always sufficient to cover all vertices of $G$ in any orthogonal drawing of $G$.
\end{corollary}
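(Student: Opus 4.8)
The plan is to derive the corollary from Theorem~\ref{th:SP} by reconciling the two different definitions of $\#(P^*)$ and then dropping the upward restriction. First I would observe that the count ``$P$-nodes with at least two $S$-node children'' used in Theorem~\ref{th:SP} and the count ``$P$-nodes with no $Q$-node child'' used here agree for every non-root $P$-node: by Lemma~\ref{lem:SP3} such a node has exactly two children and at most one of them is a $Q$-node, so having no $Q$-child is the same as having two $S$-children. The two counts can differ only for a root $P$-node with three children one of which is a $Q$-node, where the present count is one smaller. The conceptual reason for the change is that a $Q$-child of a $P$-node is a single edge with no interior vertices; in a general (non-upward) drawing it can be routed with bends so that its own segment is never needed in a cover, since its two endpoints are the shared terminals, which are already covered by the parallel $S$-child. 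Hence only $P$-nodes all of whose children are $S$-nodes can force an extra covering segment.

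For sufficiency I would run the recursive construction from the proofs of Lemma~\ref{lem:SP-suff} and Theorem~\ref{th:SP}. When the root of $\mathcal{T}$ is not a $P$-node, or is a $P$-node with two children, this already yields an upward drawing whose covering segments number exactly $\#(P^*)+1$ under the present count, because for two-child $P$-nodes the two counts coincide and every $Q$-child contributes nothing to a cover. The only case needing new work is a root $P$-node with three children, where the upward bound of Theorem~\ref{th:SP} is one larger than the target. Here I would abandon the upward layout: I would place two of the three parallel branches so that they share one covering segment running straight through a common terminal (for example, in the graph $K_{2,3}$ with terminals $s,t$ and internal vertices $a,b,c$, drawing the path through $b$, $t$, $c$ as a single vertical segment covers $b$, $t$, $c$ at once), while the third branch, or the $Q$-child, is folded with bends. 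This saves exactly one segment and produces a drawing with $\#(P^*)+1$ covering segments.

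For the ``sometimes necessary'' direction I would exhibit, for each target value $k=\#(P^*)$, the chain of $k$ four-cycles $s_i,a_i,t_i,b_i$ linked in series by single edges and capped by two degree-$1$ vertices $v_0,v_k$; its SPQ-tree has exactly $k$ $P$-nodes, each with two length-two $S$-children and no $Q$-child, so $\#(P^*)=k$, and the graph has no Hamiltonian path. The lower bound must hold for \emph{every} orthogonal drawing, so the upward argument of Theorem~\ref{th:SP} (that both terminals of each $P$-node are segment endpoints) is no longer available. Instead I would argue structurally inside each four-cycle: the vertices $a_i$ and $b_i$ cannot both be interior to their covering segments, because being interior forces the collinear path through $s_i,a_i,t_i$ (respectively $s_i,b_i,t_i$), and two distinct straight lines cannot both pass through the terminal pair $\{s_i,t_i\}$. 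Thus in each four-cycle a detour vertex is pinned to $\{s_i,t_i\}$, and since by Lemma~\ref{lem:SP3} distinct $P$-nodes share no terminals, the obstructions of different four-cycles are independent.

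The main obstacle is precisely this last step: converting the ``one pinned detour per four-cycle'' observation into the exact value $k+1$ for arbitrary bent, non-upward drawings. A naive endpoint count only yields about $k/2$ segments, and one must rule out clever drawings (such as covering $a_i$ and $b_i$ together by a segment through a single terminal) that merge detours locally but, as a global check on the chain shows, only shift the cost elsewhere and never beat $k+1$. I expect to close this with a global charging argument that assigns to each four-cycle a distinct covering segment pinned to its terminal pair and then accounts for one additional segment to cover the degree-$1$/terminal ``spine,'' which cannot coincide with any detour segment. Combining this with the sufficiency construction, and noting that for an outer-planar graph every $P$-node has a $Q$-child so that $\#(P^*)=0$ and a single segment is both trivially necessary and sufficient, would complete the proof.
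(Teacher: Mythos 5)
Your sufficiency argument is sound and is actually \emph{more} careful than the paper's own derivation. The paper obtains Corollary~\ref{cor:outer} by citing Theorem~\ref{th:SP} together with two observations: that in an outer-planar graph every $P$-node has exactly one $Q$-child (so $\#(P^*)=0$ there), and that one segment is trivially necessary for any graph. It glosses over exactly the case you isolate: when the root is a $P$-node with three children, Theorem~\ref{th:SP} only gives an upward drawing with $\#(P^*)+2$ covering segments under the corollary's count, so the corollary's upper bound does not follow from the theorem alone. Your non-upward ``folding'' construction --- letting two parallel branches share one covering segment by running through a common terminal in opposite vertical directions --- is the right repair, although you only exhibit it for $K_{2,3}$; the general case still needs the recursive statement (reflect one branch's upward drawing, merge the two vertical segments at the shared terminal, and reroute that branch's final edge around the drawing to the other terminal, which is possible since it is a single edge with no interior vertices).

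The genuine gap is in your ``sometimes necessary'' direction, and it is twofold. First, the structural claim it rests on is false: a vertex $a_i$ being interior to its covering segment only forces its two incident edges to leave $a_i$ collinearly; it does not force the whole path $s_i,a_i,t_i$ to lie on one line, since those edges may bend after leaving $a_i$. Concretely, draw $s_i,a_i,t_i$ on a horizontal line, and route $s_ib_i$ and $b_it_i$ with one bend each so that both arrive at $b_i$ horizontally: then $a_i$ and $b_i$ are \emph{both} interior to (distinct, parallel) covering segments, so no detour vertex is ``pinned'' to $\{s_i,t_i\}$, and the global charging argument you defer to has no foundation --- as you yourself acknowledge, it is the unresolved step. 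Second, the entire effort is unnecessary for the statement as claimed: ``sometimes necessary'' only requires exhibiting graphs that attain the bound, and the fact you relegate to a closing aside already does this --- every graph needs at least one covering segment, and for outer-planar series-parallel graphs every $P$-node has a $Q$-child, so $\#(P^*)=0$ and the bound $\#(P^*)+1=1$ is attained. That trivial witness, plus Theorem~\ref{th:SP}, is precisely the paper's whole proof. Per-$k$ tightness for your chain of four-cycles would be a strictly stronger result than the corollary asserts, and proving it for unrestricted (non-upward, bent) drawings would require a new lower-bound technique that neither you nor the paper supplies.
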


\subsection{Algorithm for Rooted Trees}

Our algorithm for rooted trees in Section~\ref{sec:general} gives an algorithm for obtaining an orthogonal drawing for trees with the optimal segment-cover number, since the end point of each segment in the drawing is either a degree-1 or a degree-3 vertex; see Fig.~\ref{fig:tree}(a). However this drawing is not necessarily upward (i.e.,where each root-to-leaf path is $y$-monotone). Note that for an upward orthogonal drawing of a rooted tree, the maximum degree of a vertex is at most 3. We can obtain an upward drawing for rooted trees with maximum degree-3 by using our algorithm from Section~\ref{sec:SP}. Let $T$ be a rooted tree. Take two copies of $T$, call them $T_1$, $T_2$. Let $r_1$, $r_2$ be their respective roots. Clearly there is a trivial isomorphism between $T_1$ and $T_2$. Now identify each leaf of $T_1$ with the corresponding leaf in $T_2$. The resulting graph is then a series-parallel graph with the maximum degree 3. An upward orthogonal drawing of this graph using the algorithm from Section~\ref{sec:SP} also gives an upward drawing of $T$ (as a drawing of $T_1$ and $T_2$) with optimal segment-cover number; see Fig.~\ref{fig:tree}(b).


\section{Conclusion and Future Work}
We have provided a characterization of ``Lombardi flow'' in orthogonal drawings
in terms of minimizing the total number of segments in such a drawing or
in the number of segments covering all the vertices in such a drawing.
Given that this latter version of the problem is NP-hard, as we have shown,
an interesting direction for future work includes possible approximation
algorithms for general degree-4 planar graphs, as well as the study of other
graph classes that admit polynomial-time solutions.

\ifFull
\subsection*{Acknowledgements}
\else
\subsection{Acknowledgements}
\fi
This article reports on work supported by the Defense Advanced
Research Projects Agency under agreement no.~AFRL FA8750-15-2-0092.
The views expressed are those of the authors and do not reflect the
official policy or position of the Department of Defense or the 
U.S.~Government.
This work was also supported in part by the U.S.~National Science Foundation
under grants 1228639 and 1526631.
We would like to thank Timothy Johnson and Michael Bekos for several helpful discussions
related to the topics of this paper.


\appendix
\newpage

\section{Figures and Proof Details}
\label{sec:app-proofs}

\begin{figure}[htb]
\vspace{-0.7cm}
\centering
\includegraphics[width=0.7\textwidth]{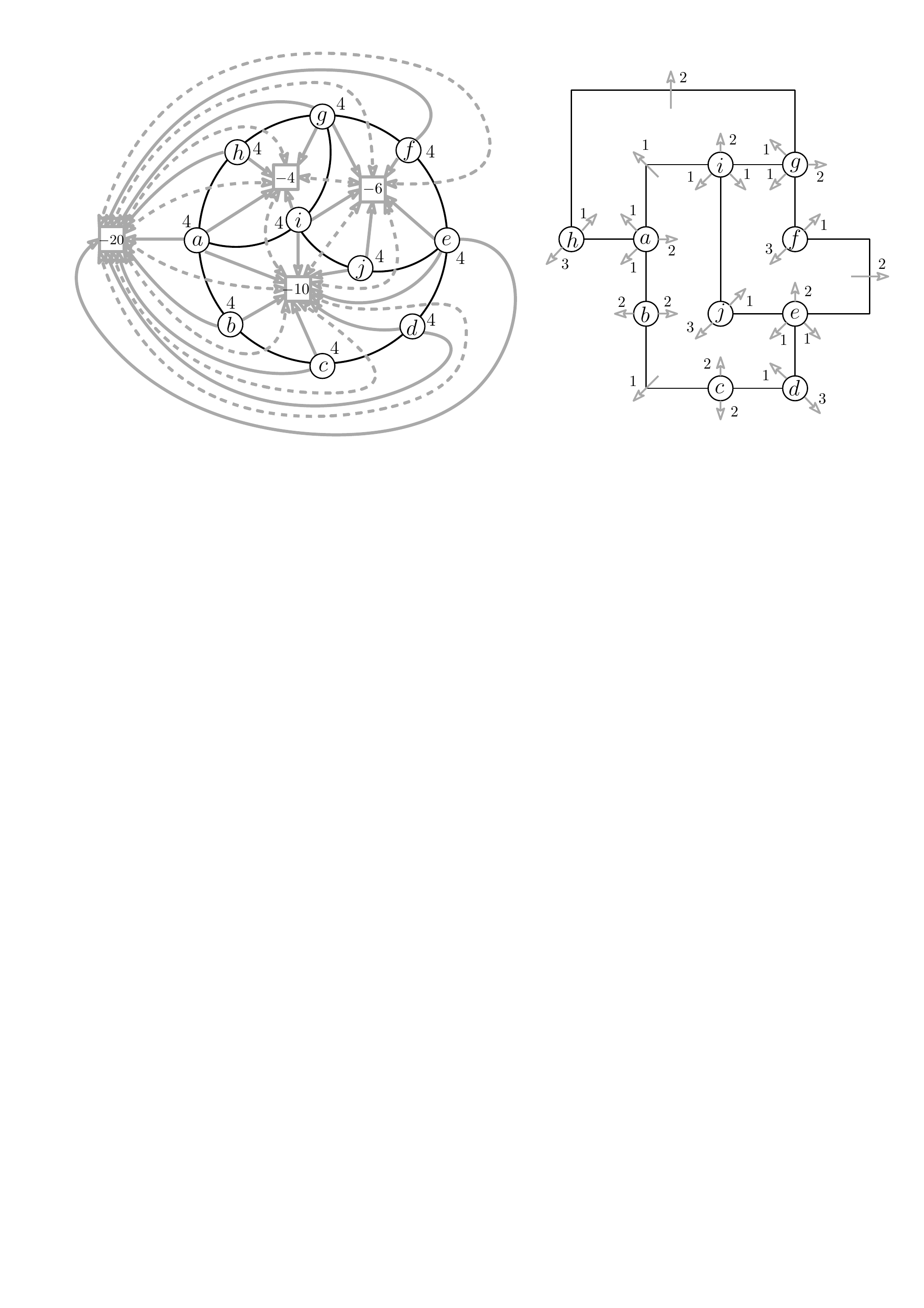}\\
(a)\hspace{0.4\textwidth}(b)
\caption{(a) Construction of the flow network (with grey edges) by
Tamassia~\cite{Tam87} for a planar graph, $G$, 
with maximum degree 4 (black edges), (b) an orthogonal drawing of $G$ and the corresponding flows in the network.}
\label{fig:tamassia}
\end{figure}

\begin{figure}[htb]
\vspace{-1.2cm}
\centering
\includegraphics[width=0.7\textwidth]{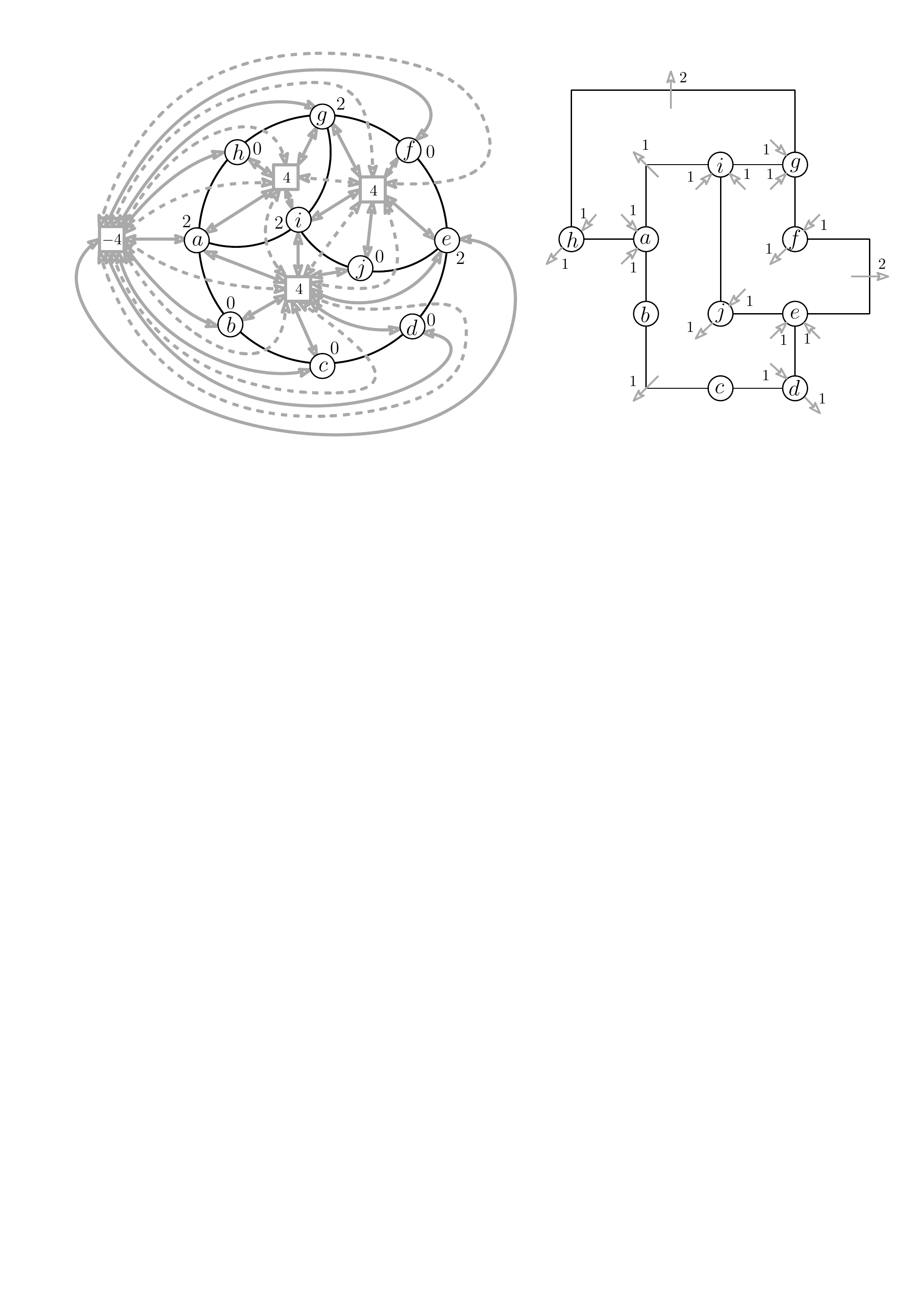}\\
(a)\hspace{0.4\textwidth}(b)
\caption{(a) Construction of the modified flow network (with grey edges) for the graph $G$ (black edges) from Fig.~\ref{fig:tamassia}(a), (b) the drawing of $G$ from Fig.~\ref{fig:tamassia}(b) and the corresponding flows in the modified network.}
\label{fig:modification}
\end{figure}

\begin{figure}[htb]
\vspace{-1.2cm}
\centering
\includegraphics[width=0.45\textwidth]{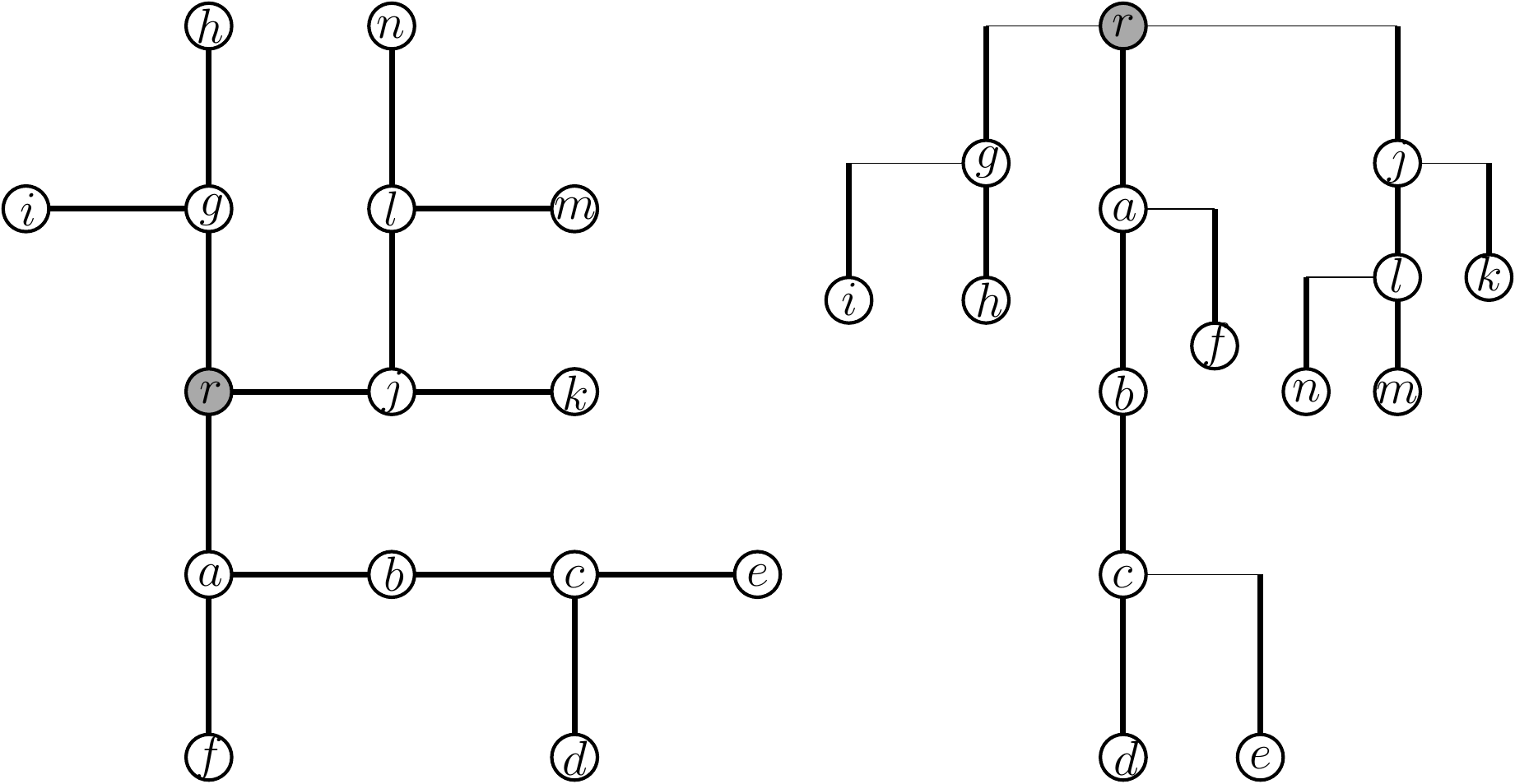}\\
(a)\hspace{0.2\textwidth}(b)
\caption{Orthogonal drawings of trees: (a) MSO-drawing and MSCO-drawing, (b) upward drawing with the minimum segment-cover number.\vspace{-0.5cm}}
\label{fig:tree}
\end{figure}

\begin{figure}[htb]
 \centering
  \subfloat[root is an $S$-node]{\includegraphics[height=0.25\textheight]{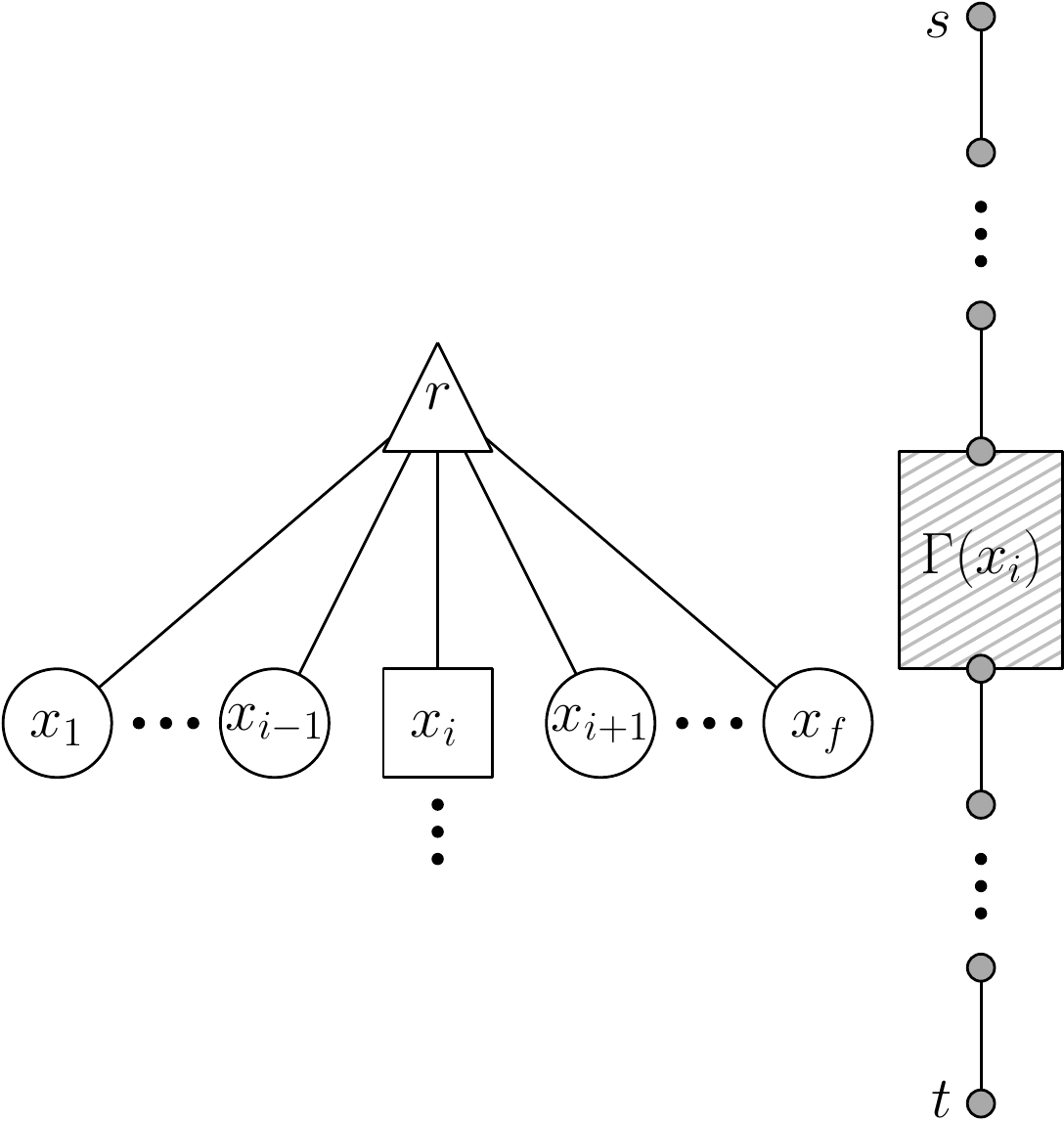}} \hfill
  \subfloat[root is a $P$-node]{\includegraphics[height=0.25\textheight]{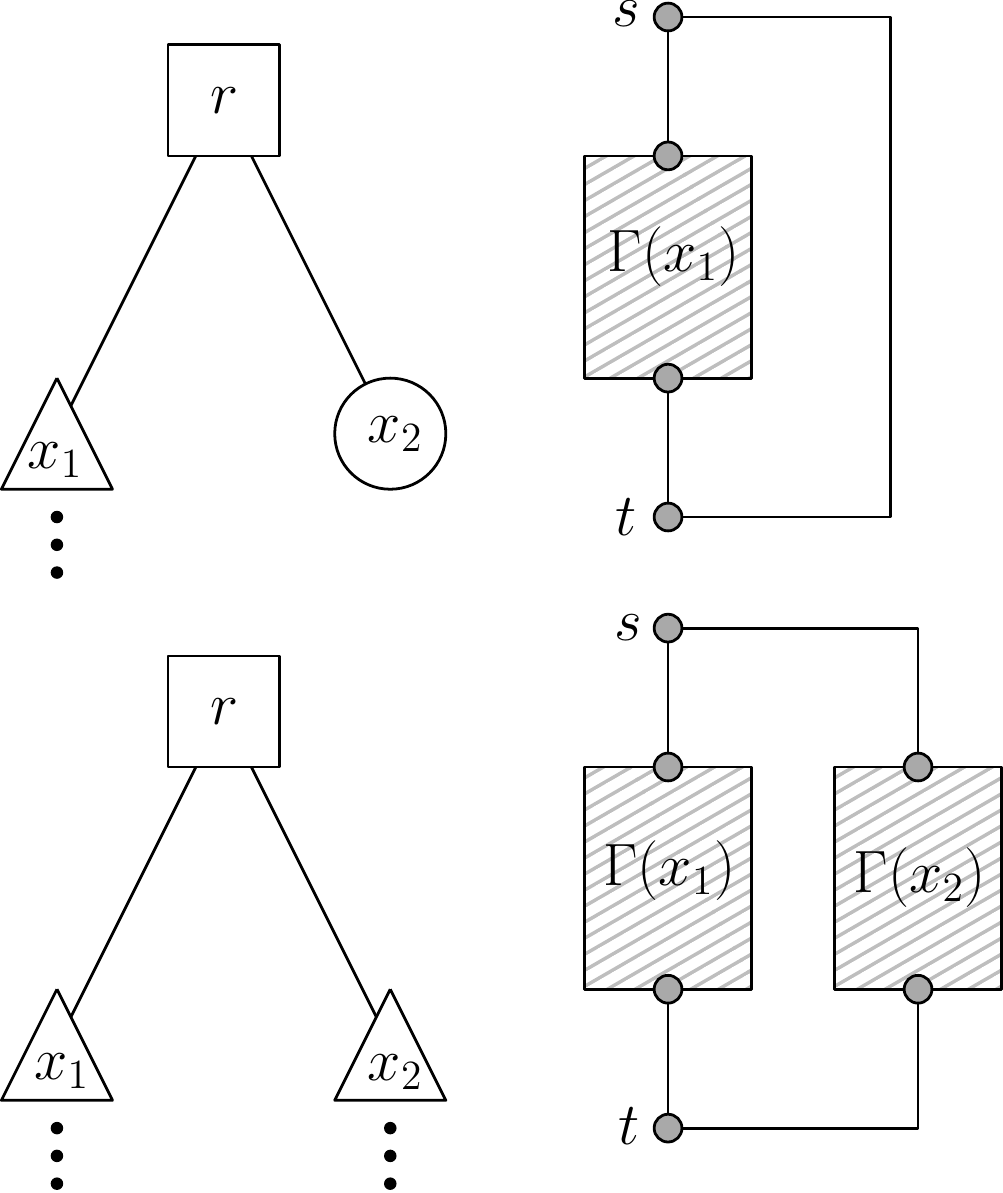}} \hfill
  \caption{Upward orthogonal drawing algorithm for a series-parallel graph. $S$-, $P$-, and $Q$-nodes are denoted by triangles, squares and circles, respectively.}
  \label{fig:SP}
\end{figure}

\newpage

\begin{figure}[htb]
\centering
\includegraphics[page=1,width=0.55\textwidth]{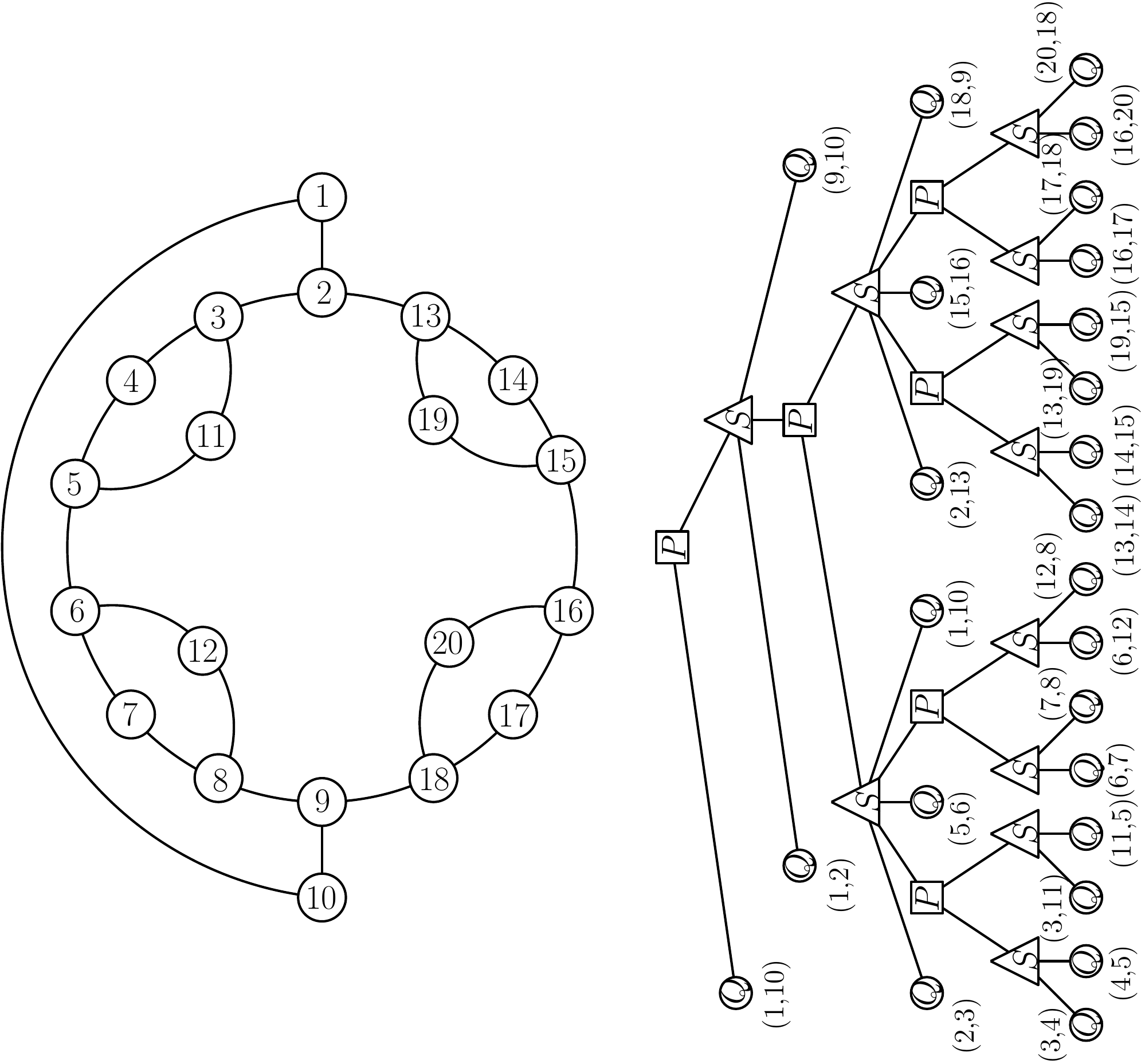}\hfill
\includegraphics[page=2,width=0.4\textwidth]{SP.pdf}\\
\hspace{0.1\textwidth}(a)\hspace{0.25\textwidth}(b)\hspace{0.2\textwidth}(c)\hspace{0.2\textwidth}(d)
\caption{(a) A series parallel graph $G$, (b) the SPQ-tree of $G$, (c) an upward drawing of $G$ with the minimum segment-cover number, (d) an MSCO-drawing of $G$.}
\label{fig:SP-illus}
\end{figure}

\newpage
\subsection{Proof of Theorem~\ref{th:augmentation}}
Let $f$ be a non-rectangular face in $\Gamma$. A vertex $v$ on $f$ is called \textit{left-open} (resp. \textit{right-open}, \textit{top-open}, \textit{bottom-open}) in $f$ if the left (resp. right, top, bottom) port of $v$ is inside $f$ and is not incident to any edge in $\Gamma$. Note that a pair of vertices $v_1, v_2$ on $f$ can align together only if one of them is left-open and the other is right open, or one of them is top open and the other bottom-open. We construct a bipartite graph $B(f)$ for $f$ as follows. The vertices of $B(f)$ are all the vertices on $f$, which are left-open, or right-open, or top-open or bottom-open. The edges of $B(f)$ are between each pair of vertices $(v_1, v_2)$ such that either (i) $v_1$ is left-open, $v_2$ is right-open (call them \textit{horizontal pair}), or (ii) $v_1$ is top-open, $v_2$ is bottom-open (call them \textit{vertical pair}). Each edge of $B(f)$ represents a potential pair of vertices which can be aligned together. However, we need to select a set of such pairs such that (i) each vertex is paired with at most one other, and (ii) the chosen pairs together can augment $f$ to a set of rectangular faces. For the two conditions to hold, the edges corresponding to all pairs in $f$ must form a maximum matching in $B(f)$ and it should be possible to place the edges for all horizontal pairs (resp. all vertical pairs) inside $f$ without crossing. Two horizontal (resp. vertical) pairs are \textit{in conflict} if both of them cannot be placed together inside $f$ without crossing. A maximum matching in $B(f)$ is called \textit{planar} if no two horizontal pairs and no two vertical pairs corresponding to these matching edges are in conflict. We have the following lemma.

\begin{lemma}
\label{lem:matching} Let $M$ be a maximum matching of $B(f)$. Then a planar maximum matching of $B(f)$ can be constructed from $M$ in polynomial time.
\end{lemma}
\begin{proof} If $M$ contains no pair of edges in conflict, then we are done. Thus consider a pair of edges  $(a,c), (b,d)$ from $M$ are in conflict. Assume without loss of generality that $a$, $b$, $c$, $d$ appear in this clockwise order around $f$. Then $M'=(M-\{(a,c),(b,d)\})\cup\{(a,d),(b,c)\}$ is another maximum matching of $B(f)$ with one fewer pairs in conflict. A repetition of this procedure eliminates all pairs of $M$ in conflict to obtain a planar maximum matching of $B(f)$. \qed
\end{proof}

Since finding a maximum matching in a bipartite graph can be done in polynomial time (e.g., using the Hopcroft-Karp algorithm~\cite{HK73}), by Lemma~\ref{lem:matching}, one can find a planar maximum matching of $B(f)$ in polynomial time. Adding the edges corresponding to the matching might create crossing between horizontal and vertical edges. However we can introduce a dummy vertex at the point of the crossing to obtain a valid rectangular layout of $f$. Doing this for each non-rectangular faces in $\Gamma$ gives an augmentation $\Gamma^*$ of $\Gamma$ into a rectangular layout. Since we maximized number of aligned pairs inside each face of $\Gamma$, $\Gamma^*$ also gives a solution to the Augmentation problem.
\qed

\medskip

\subsection{Proof of Theorem~\ref{th:general}}
Given a general planar graph $G$ with the maximum degree~4, we draw the {\closure} of $G$ using the algorithm in Section~\ref{sec:modification}, and attach the remaining tree parts to the connection vertices.

Consider a rooted tree $T$ with the maximum degree 4. We find a minimum segment drawing of $T$ as follows. We first find an arbitrary root-to-leaf path and draw the path in a single segment. Next we repetitively take a leaf $v_l$ of $T$ not already drawn, find a shortest path $P_l$ from $v_l$ to a vertex $v_p$ already drawn and draw $P_l$ as a single segment starting from a port (top, bottom, left or right) of $v_p$ not already occupied by a different edge. If $v_p$ is an odd-degree vertex in the already drawn graph, we place the segment for $P_l$ in such a way that it shares an incident segment at $v_p$ already drawn. The number of segments in such a drawing of $T$ is exactly $(\#(V_1)+\#(V_3))/2$, where $\#(V_1)$ and $\#(V_3)$ are the number of degree-1 and degree-3 vertices in $T$.

Now we describe our algorithm for a general planar graph $G$ with the maximum degree 4. We first draw the {\closure} of $G$ using the algorithm in Section~\ref{sec:modification}. Next we draw each connected component (say $T$) of the graph obtained by deleting the {\closure of $G$ from $G$} using the algorithm describe in the previous paragraph. Note that $T$ is a tree. We attach the drawing of $T$ at the corresponding connection vertex at a port not already used. Again if a connection vertex has odd degree in the already drawn graph we place the drawing of $T$ such that it shares a segment with the existing drawing.

The resulting drawing $\Gamma$ gives an embedding of the input graph $G$. The drawing $\Gamma$ is a minimum segments drawing for the embedding since the drawing of the closure minimized the vertex bends and edge bends, and the remaining parts of the graph has not vertex or edge bends. \qed

\ifFull\else
\subsection{Proof of Theorem~\ref{thm:np}}
We prove the \NP-hardness of the MSCO-drawing problem by a reduction from the
\textit{Hamiltonian path problem}~\cite{Garey:1990:CIG:574848}, 
which asks whether a given graph
$G$ has a path that contains each vertex of $G$ exactly once (i.e.,
a \textit{Hamiltonian path}). 
The Hamiltonian path problem remains
\NP-complete for planar graphs with maximum degree 3~\cite{doi:10.1137/0205049}.
Let $G$ be such a planar graph. We show that $G$ has a Hamiltonian path
if and only if its segment-cover number is 1.

Clearly, if $G$ has an orthogonal drawing that contains a segment
covering all the vertices, then the path of $G$ corresponding to
this segment is a Hamiltonian path. Conversely if $G$ has a Hamiltonian
path $P$, then we can find an orthogonal drawing of $G$ with a
single segment covering all the vertices as follows. Take a fixed
planar embedding $\emb(G)$ of $G$, draw $P$ as a single segment
(say a horizontal segment) and route the remaining edges as orthogonal
polylines with necessary bends keeping the embedding $\emb(G)$
fixed. This gives a valid orthogonal drawing since each vertex has
a maximum degree 3, and hence from each vertex $v$, its incident
edges on $P$ are drawn to its left and right, while the only other
edge (if any) is drawn to its top or bottom without any overlap.
\qed
\fi

\newpage

\section{Examples for MSCO-Drawings}

In this section we illustrate some examples of an orthogonal drawing obtained from the algorithm  by Tamassia~\cite{Tam87}, along with an MSO-drawing, and an MSCO-drawing of the same graph.

\begin{figure}[htb]
\centering
\includegraphics[page=1,width=0.8\textwidth]{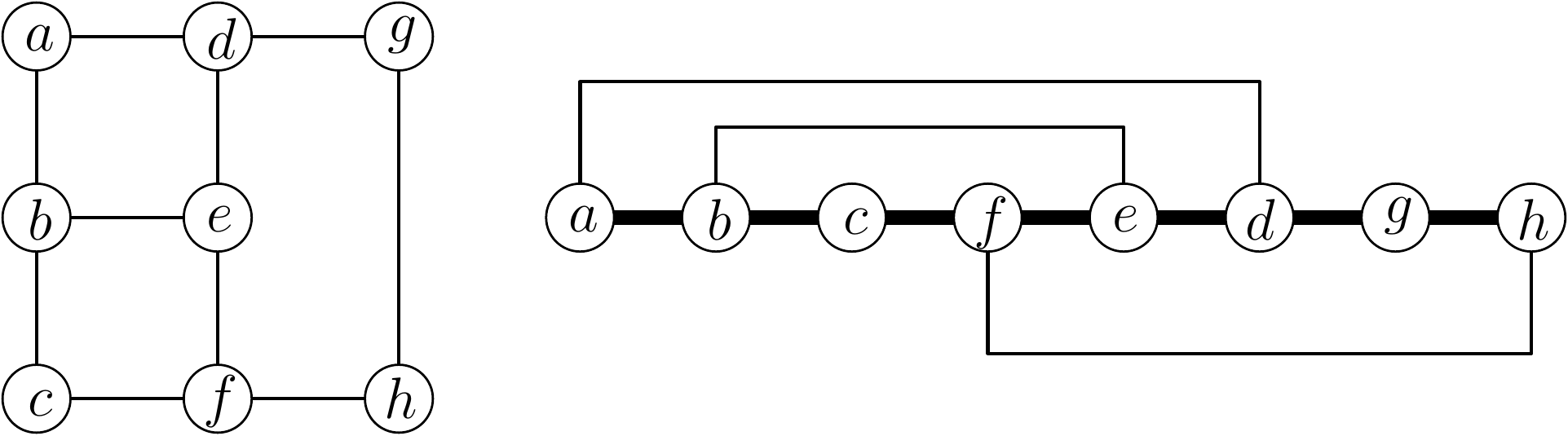}\\
(a)\hspace{0.4\textwidth}(b)\hspace{0.15\textwidth}
\caption{(a) A drawing of a planar graph $G$ using the algorithm by Tamassia~\cite{Tam87}, which is also a rectangular drawing, (b) an MSCO-drawing of $G$ with only a single segment covering all the vertices.}
\label{fig:example01}
\end{figure}

\begin{figure}[htb]
\centering
\includegraphics[page=2,width=0.8\textwidth]{examples.pdf}\\
(a)\hspace{0.18\textwidth}(b)\hspace{0.27\textwidth}(c)\hspace{0.15\textwidth}
\caption{(a) A drawing of a planar graph $G$ using the algorithm by Tamassia~\cite{Tam87}, (b) an MSO-drawing of $G$, (c) an MSCO-drawing of $G$ with only a single segment covering all the vertices.}
\label{fig:example02}
\end{figure}

\begin{figure}[htb]
\centering
\includegraphics[page=3,width=0.8\textwidth]{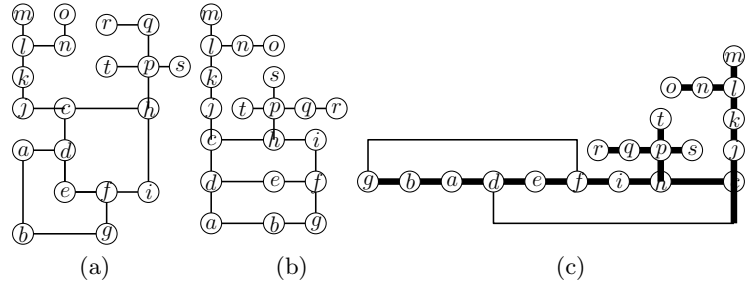}\\
(a)\hspace{0.18\textwidth}(b)\hspace{0.27\textwidth}(c)\hspace{0.1\textwidth}
\caption{(a) A drawing of a planar graph $G$ using the algorithm by Tamassia~\cite{Tam87}, (b) an MSO-drawing of $G$, (c) an MSCO-drawing of $G$.} 
\label{fig:example03}
\end{figure}

\begin{figure}[htb]
\centering
\includegraphics[page=4,width=0.8\textwidth]{examples.pdf}\\
\hspace{0.05\textwidth}(a)\hspace{0.25\textwidth}(b)\hspace{0.25\textwidth}(c)
\caption{(a) A drawing of a planar graph $G$ using the algorithm by Tamassia~\cite{Tam87}, (b) an MSO-drawing of $G$, (c) an MSCO-drawing of $G$ with only a single segment covering all the vertices.}
\label{fig:example04}
\end{figure}

\begin{figure}[htb]
\centering
\includegraphics[page=5,width=0.8\textwidth]{examples.pdf}\\
\hspace{0.05\textwidth}(a)\hspace{0.25\textwidth}(b)\hspace{0.25\textwidth}(c)
\caption{(a) A drawing of a planar graph $G$ using the algorithm by Tamassia~\cite{Tam87}, (b) an MSO-drawing of $G$, (c) an MSCO-drawing of $G$ with only a single segment covering all the vertices.}
\label{fig:example05}
\end{figure}

\end{document}